\documentclass[a4paper, 12pt]{article}

\usepackage{authblk}
\usepackage{float}
\usepackage[a4paper,
			top=3cm, bottom=3cm,
		    left=3cm, right=3cm]{geometry}

\setlength{\parindent}{0cm}
\setlength{\parskip}{3mm}
\usepackage[english]{babel}
\usepackage{amsmath, amsfonts, amssymb, amsthm}
\usepackage[colorlinks = true,
			linkcolor = blue,
			urlcolor = black,
			bookmarksopen = true]{hyperref}				
\usepackage[capitalise]{cleveref}
\usepackage[
backend=biber,
style=numeric,
sorting=nyt,
maxalphanames=3,
maxbibnames=3,
isbn=true,
url=true,
doi=true,
giveninits=true,
]{biblatex}
\usepackage{csquotes}
\usepackage{longtable}
\usepackage{tikz}
\usetikzlibrary{positioning}
\usetikzlibrary{calc}
\addbibresource{biblio.bib}

\theoremstyle{definition}
\newtheorem{theo}[equation]{Theorem}
\crefname{theo}{Theorem}{Theorems}
\newtheorem{defi}[equation]{Definition}
\newtheorem{prop}[equation]{Proposition}
\newtheorem{algo}[equation]{Algorithm}
\newtheorem{lem}[equation]{Lemma}
\newtheorem{cor}[equation]{Corollary}
\newtheorem{rem}[equation]{Remark}
\numberwithin{equation}{section}

\usepackage{graphicx}
\graphicspath{{./img/}}

\newcommand{\im}{\mathrm{i}}
\newcommand{\ket}[1]{\left| #1 \right\rangle}
\newcommand{\ex}[1]{\exp\left( #1 \right)}
\DeclareMathOperator{\gl}{GL}

\usepackage{pifont}

\title{A survey about Hidden Subgroup Problem from a mathematical and cryptographic perspective}
\author[1]{Simone Dutto}
\author[2]{Pietro Mercuri} 
\author[2]{Nadir Murru} 
\author[3]{Lorenzo Romano}

\affil[1]{Independent Researcher\\
\texttt{smn.dutto@gmail.com}}
\affil[2]{Department of Mathematics, University of Trento\\
\texttt{mercuri.ptr@gmail.com, nadir.murru@unitn.it}}
\affil[3]{Department of Mathematical Sciences, Politecnico of Torino\\
\texttt{lorenzo.romano@polito.it}}
\date{}

\begin{document}

\maketitle

\begin{abstract}
We provide a survey on the Hidden Subgroup Problem (HSP), which plays an important role in studying the security of public-key cryptosystems.
We first review the abelian case, where Kitaev's algorithm yields an efficient quantum solution to the HSP, recalling how classical problems (such as order finding, integer factorization, and discrete logarithm) can be formulated as abelian HSP instances.
We then examine the current state of the art for non-abelian HSP, where no general efficient quantum solution is known, focusing on some relevant groups including dihedral group (connected to the shortest vector problem), symmetric groups (connected to the graph isomorphism problem), and semidirect product constructions (connected, in a special case, to the code equivalence problem).
We also describe the main techniques for addressing the HSP in non-abelian cases, namely Fourier sampling and the black-box approach.
Throughout the paper, we highlight the mathematical notions required and exploited in this context, providing a cryptography-oriented perspective.
\end{abstract}

\section{Introduction}
\label{sec:intro}

Public-key cryptography relies on assumptions that guarantee the security of a scheme up to a certain level, related to the computational complexity of solving the underlying mathematical problem.
Such a resource-dependent definition is severely damaged by the awakening of quantum computers, on which new algorithms are available and old computational assumptions no longer hold.
In this context, the Hidden Subgroup Problem (HSP) appears to be one of the minimal cryptographic assumptions, in the sense that most of the widely used assumptions can be efficiently recast as an instance thereof (e.g., discrete logarithm, integer factorization, but also the shortest vector problem and others).
The post-quantum relevance of this assumption is motivated by Kitaev's algorithm, a quantum routine which generalizes Shor's algorithm and efficiently solves the HSP for abelian groups,
delimiting the area of post-quantum to those assumptions that do not reduce at all or for which the only known reductions are to non-abelian cases.
Indeed, currently, there are no efficient quantum algorithms for solving the HSP in the non-abelian case, despite the existence of several approaches focusing on specific non-abelian groups, such as the dihedral group (which is connected to the shortest vector problem).
In this paper, we would like to provide an overview about the HSP, showing which problems of cryptographic relevance reduce to instances of this problem and surveying the existing quantum algorithms for solving the HSP across the groups studied so far (distinguishing between abelian and non-abelian cases).
Although HSP plays a central role in the study of cryptosystem security, existing surveys are often written from a computer science perspective, assuming familiarity with quantum computing and omitting mathematical details.
This work aims at addressing this gap highlighting the mathematical notions required and exploited in this context, and providing a cryptography oriented presentation.
Thus, the survey is intended for readers with a mathematical and cryptographic background who may lack basic knowledge about quantum computing and require an algebraic perspective on the problem and its implications.

\cref{sec:bg} provides the mathematical background and fixes the notation exploited in the paper.
Specifically, in Sections \ref{ssec:group} and \ref{ssec:reps}, we recall the basic notions needed in group theory and representation theory.
\cref{ssec:poly} discusses the concept of reductions and computational time of algorithms, while \cref{ssec:quantum} focuses on quantum computing, introducing the fundamentals necessary to understand the approaches used in the development of quantum algorithms.
Finally, in \cref{ssec:hsp} we formally introduce the HSP.

In \cref{sec:abel}, first, we outline Kitaev's algorithm, which was the first polynomial--time quantum algorithm able to solve the HSP for every abelian group, where the quantum routine on which it relies is called Fourier sampling and is reviewed in Algorithm \ref{algo:AHSP}.
In \cref{ssec:simon,ssec:ofp,ssec:dlp}, we then show how main classical problems, like Simon's problem, order finding problem, integer factorization, and the discrete logarithm problem, can be reduced to instances of abelian HSP.
\cref{ssec:root} is devoted to the discrete root problem, which currently has not yet been explicitly reduced to an HSP instance.
Since extracting a discrete root can be efficiently solved when the modulus factorization is known, this problem is already efficiently solvable by quantum algorithms and, consequently, cryptosystems relying on it are not quantum-safe.
However, a direct reduction of this problem to a HSP instance is interesting both from a computational point, since it could lead to improvements in the efficiency of the algorithms, and from a theoretical point of view, since it could show other ways to find reductions.
In this section, we present some of our attempts in this direction and we highlight the obstructions that prevented our approaches from succeeding.

\cref{sec:nabel} is devoted to the non--abelian case for which a quantum algorithm able to efficiently solve the general HSP is not known.
However, efficient methods exist for certain specific non-abelian groups, such as Dedekind and Hamiltonian groups; see \cref{ssec:ded}.
In \cref{ssec:dihedral}, we focus on the dihedral group which plays a significant role in cryptography, as the shortest vector problem on lattices can be reduced to the HSP over these groups.
Another relevant case is provided by the symmetric group, which is related to the graph isomorphism problem, see \cref{ssec:sym}.
In \cref{ssec:semi}, we present several results concerning groups obtained via the semidirect product.
\cref{ssec:code} focuses on a special semidirect product connected to the code equivalence problem, which has a well-known relevance in code--based cryptography.
The algorithms developed to address the HSP in the cases discussed in these sections are all based on the so-called Fourier sampling.
In \cref{ssec:other} we review other groups for which the HSP has been studied with this technique and \cref{ssec:general} reviews the state of the art for the general case. Finally, in \cref{ssec:black} we present an alternative approach to the Fourier sampling, namely the black-box approach.

\cref{sec:concl} contains the conclusions and also outlines some further developments regarding the HSP.

Finally, the Appendix provides a summary of the groups for which the HSP has been investigated, as well as a recap about the problems that can be seen as instances of the HSP.

\section{Background and notation}
\label{sec:bg}
In this section, we fix the notation used throughout the paper and provide the necessary background in group theory, representation theory, and quantum computing.

\subsection{Group theory} 
\label{ssec:group}

We start by recalling some basic concepts and definitions of group theory in order to fix notation and terminology.
For a more detailed source, see \cite{Rot12}.

Given the (abelian) group $(\mathbb Z,+)$ of integers, the subset $N\mathbb Z$ of all the multiples of an integer $N$ is a (normal) subgroup of $\mathbb Z$.
We denote by $\mathbb Z_N$ the quotient group $\mathbb Z/N\mathbb Z$.
Each element of $\mathbb Z_N$ represents the set of integers with a given remainder with respect to the division by $N$.
We usually denote the $N$ elements of $\mathbb Z_N$ by $0,1,\ldots, N-1$ that correspond to the cosets $0+N\mathbb Z,1+N\mathbb Z,\ldots,N-1+N\mathbb Z$ respectively.
\begin{rem}
An integer $m\in\mathbb Z$ can be sent naturally in $\mathbb Z_N$ just taking its coset:
\begin{align*}
\pi_N\colon\mathbb Z&\longrightarrow\mathbb Z_N, \\
m&\longmapsto m+N\mathbb Z.
\end{align*}
But the map in the other direction is not uniquely defined: each element of the set $m+N\mathbb Z$ can be taken as the image of the map $\mathbb Z_N\longrightarrow\mathbb Z$.
We fix these maps once and for all:
\begin{align*}
\mathrm{lift}_N\colon\mathbb Z_N&\longrightarrow\mathbb Z,\\
m+N\mathbb Z&\longmapsto k\quad\text{with }k\in\{0,1,\ldots,N-1\} \text{ such that }k-m\in N\mathbb Z.
\end{align*}
In other words, we always choose the integer $k$ among $0,1,\ldots,N-1$ such that $k$ and $m$ have the same remainder with respect to the division by $N$.

Hence, if we have an element in $\mathbb Z_N$ that we consider in $\mathbb Z_M$, for any two integers $N$ and $M$, we mean that we are using the following map
\[
\begin{tikzpicture}
  \node (A) at (0,0) {$\mathbb{Z}_N$};
  \node (B) at (2,0) {$\mathbb{Z}$};
  \node (C) at (4,0) {$\mathbb{Z}_M,$};
  \node (D) at (-0.5,-1) {$x + N\mathbb{Z}$};
  \node (E) at (5.5,-1) {$\mathrm{lift}_N(x + N\mathbb{Z}) + M\mathbb{Z}.$};

  \draw[->] (A) -- (B) node[midway, above] {$\mathrm{lift}_N$};
  \draw[->] (B) -- (C) node[midway, above] {$\pi_M$};
  \draw[|->] (D) -- (E) node[midway, above] {$\pi_M\circ \mathrm{lift}_N$};
\end{tikzpicture}
\]
\end{rem}
\begin{defi}
A group is \emph{finite} if it has finitely many elements.
The \emph{order} $|G|$ of a finite group $G$ is its number of elements.
The \emph{order} of an element $g$ in a group $G$ is the smallest positive integer $n$ such that $g^n=1_G$; if this never happens, the order of $g$ is infinity (this cannot happen in finite groups).\\
A \emph{set of generators} of a group $G$ is a set $B$ of elements of $G$ such that every element of $G$ can be written as a product of powers of finitely many elements of $B$.\\
A group $G$ is \emph{cyclic} if there is an element $g\in G$ such that $\{g^n,n\in\mathbb Z\}=G$, i.e., a group is cyclic if it can be generated by one (not unique) element.
\end{defi}

\begin{rem}
The group $\mathbb Z_N$ is cyclic of order $N$ and every integer coprime with $N$ is a generator.
\end{rem}

There are many different types of ``products" that given two groups produce a new one which can be defined.
In the following, we describe the main ones.
We start with the most basic product: the direct product.
\begin{defi}
The \emph{direct product} $G_1\times G_2$ of the two groups $G_1$ and $G_2$ is the group whose elements are the ordered pairs $(g_1,g_2)$, with $g_1\in G_1$ and $g_2\in G_2$, and the group operation is given componentwise, i.e.,
\[
(g_1,g_2)\cdot(h_1,h_2):=(g_1h_1,g_2h_2),\quad\text{for }g_1,h_1\in G_1, g_2,h_2\in G_2.
\]
This notion can be generalized for every finite number of groups $G_1\times G_2\times\cdots\times G_m$.
\end{defi}
A slightly more complicated product is the semidirect product; it is a generalization of the previous construction.
\begin{defi}
Let $N$ and $H$ be two groups, and given a group homomorphism $\varphi\colon H\longrightarrow\mathrm{Aut}(N)$, the \emph{semidirect product} $N\rtimes_\varphi H$ of $N$ and $H$ with respect to $\varphi$ is the group whose elements are the ordered pairs $(n,h)$, with $n\in N$ and $h\in H$, and the group operation is given by
\[
(n_1,h_1)\cdot(n_2,h_2):=(n_1\varphi(h_1)(n_2),h_1h_2),\quad\text{for }n_1,n_2\in N, h_1,h_2\in H.
\]
When $\varphi$ is the constant homomorphism $\varphi(h)=\mathrm{id}_N$, for each $h\in H$, we get the direct product.
When $H$ and $N$ are subgroups of $G = N H$ with $N$ normal and $H \cap N = \{1\}$, then there is a standard homomorphism on $N$ given by the conjugation action of $H$ that allows to obtain $G$ as the \textit{inner semidirect product} $G = N \rtimes H$.
\end{defi}

\begin{rem}
The dihedral group $D_N$ of the symmetries of a regular polygon with $N$ sides is called \emph{dihedral group} and it is finite of order $2N$.
It can be written, using the semidirect product, as
\[
D_N\cong \mathbb Z_N\rtimes_\varphi \mathbb Z_2,
\]
with $\varphi\colon\mathbb Z_2\longrightarrow\mathrm{Aut}(\mathbb Z_N)$ given by $\varphi(x)\colon n\longmapsto (-1)^x n$, for $n\in \mathbb Z_N$.
\end{rem}

The structure of the finite abelian groups is completely known, as it is shown by the following theorem.
\begin{theo}[Fundamental Theorem of Finite Abelian Groups]\label{th:FTFAG}
Every finite abelian group $G$ can be written as
\[
G\cong \mathbb Z_{N_1}\times\mathbb Z_{N_2}\times\cdots\times\mathbb Z_{N_\ell},
\]
where $N_1,N_2,\ldots,N_\ell$ are powers of (not necessarily distinct) primes.
\end{theo}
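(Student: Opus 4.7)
The plan is to prove the theorem in two reductions: first, decompose $G$ as an internal direct product of its $p$-primary components, one for each prime $p$ dividing $|G|$, and then decompose each finite abelian $p$-group as a direct product of cyclic groups. Both reductions use the abelianness of $G$ in an essential way, since in general Sylow subgroups need neither be normal nor commute with one another.

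For the primary decomposition, for each prime $p$ dividing $|G|$ I would consider the $p$-primary subgroup
\[
G_p := \{g \in G : g^{p^k} = 1_G \text{ for some } k \geq 0\},
\]
which is a subgroup precisely because $G$ is abelian. The key input is Bézout's identity: if $g\in G$ has order $n = p_1^{a_1}\cdots p_r^{a_r}$ and one sets $n_i := n/p_i^{a_i}$, then $\gcd(n_1,\ldots,n_r) = 1$ yields integers $c_i$ with $\sum_i c_i n_i = 1$, so $g = \prod_i g^{c_i n_i}$ with each factor lying in the corresponding $G_{p_i}$. A Lagrange-type argument on orders shows $G_{p_i}\cap \prod_{j\neq i} G_{p_j} = \{1_G\}$, whence $G \cong \prod_p G_p$ as an internal direct product.

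The harder half is decomposing each finite abelian $p$-group $P$ into cyclic factors, which I would tackle by induction on $|P|$. Pick $g \in P$ of maximal order $p^k$ and set $C := \langle g \rangle$; applying the inductive hypothesis to the quotient $\bar{P} := P/C$ yields a decomposition $\bar{P} = \langle \bar{x}_1\rangle \times \cdots \times \langle \bar{x}_\ell\rangle$ with each $\bar{x}_i$ of order $p^{j_i}$. The strategy is to lift each $\bar{x}_i$ to an element $x_i \in P$ satisfying $x_i^{p^{j_i}} = 1_P$: the subgroup $H := \langle x_1,\ldots,x_\ell\rangle$ will then be a complement of $C$ inside $P$, and combining $C$ with the inductive decomposition of $H$ produces the sought cyclic factorization of $P$.

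The main obstacle is precisely this lifting step, which is the content of the following lemma: if $x\in P$ and the coset $xC$ has order $p^j$ in $P/C$, then there exists $x' \in xC$ with $(x')^{p^j} = 1_P$. The argument is that $x^{p^j} \in C$, so $x^{p^j} = g^t$ for some integer $t$, and the maximality of the order $p^k$ of $g$ forces $p^j \mid t$; setting $x' := x\, g^{-t/p^j}$ and using that $P$ is abelian then gives $(x')^{p^j} = 1_P$. Once one has these lifts $x_i$, the verification $C \cap H = \{1_P\}$ reduces to noting that if $x_1^{a_1}\cdots x_\ell^{a_\ell} \in C$ then each $\bar{x}_i^{a_i}$ is trivial in $\bar{P}$, forcing $p^{j_i}\mid a_i$ and hence $x_i^{a_i} = 1_P$; a cardinality count then yields $|CH| = |C|\cdot|\bar{P}| = |P|$, so $P = C \times H$. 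This lifting step is where both the abelian hypothesis and the maximality of the order of $g$ are truly essential.
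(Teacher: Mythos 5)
Your proposal is correct and complete in its essential steps: the Bézout argument for the primary decomposition, and the induction on a $p$-group via an element of maximal order together with the lifting lemma (where maximality forces $p^j \mid t$) are exactly the standard argument. The paper states this theorem as a classical result without proof, deferring to the cited group-theory reference, and your argument is essentially the proof found there, so there is nothing further to reconcile.
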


\subsection{Representation theory} 
\label{ssec:reps}

In the following, useful results from representation theory are recalled.
For a more complete and detailed reference on representations of finite groups, see, for example, \cite{Ser96}.

\begin{defi}
A \emph{representation} $(\rho,V)$ of a group $(G, \cdot)$ on a vector space $V$ over a field $K$ is a homomorphism $\rho\colon G \longrightarrow \mathrm{GL}(V)$, such that $\rho(g_1 \cdot g_2) = \rho(g_1) \rho(g_2),$ for each $g_1 , g_2 \in G$.
A \emph{complex representation} is a representation such that $K=\mathbb C$.\\
A \emph{homomorphism} between $\rho_1\colon G \longrightarrow \mathrm{GL}(V_1)$ and $\rho_2\colon G \longrightarrow \mathrm{GL}(V_2)$ is a linear map $\varphi\colon V_1 \longrightarrow V_2$ such that $\big(\varphi\circ \rho_1(g)\big)({v}) = \big(\rho_2(g)\circ\varphi\big)({v})$ for each $g \in G$ and $v\in V_1$.\\
Given a representation $\rho\colon G \longrightarrow \mathrm{GL}(V)$, a subspace $W$ of $V$ is $G$-\emph{invariant} if $\rho(g)({w}) \in W$ for every $g\in G$ and ${w} \in W$.\\
A representation is \emph{irreducible} if it contains no nonzero proper $G$-invariant subspaces.

If $V$ has finite dimension, the \emph{character} of $\rho$ is defined by the trace as the map
\begin{align*}
\chi_\rho\colon G &\longrightarrow K, \\
g &\longmapsto \mathrm{Tr}(\rho(g)).
\end{align*}
\end{defi}

\begin{lem}[Schur's lemma] \label{lem:schur}
The following hold:
\begin{enumerate}
\item If $\rho_1\colon G \longrightarrow \mathrm{GL}(V_1)$ and $\rho_2\colon G \longrightarrow \mathrm{GL}(V_2)$ are irreducible representations of the group $G$, then every nonzero homomorphism $\varphi$ between $\rho_1$ and $\rho_2$ is an isomorphism.\label{lem:schur1}
\item Let $\rho\colon G \longrightarrow \mathrm{GL}(V)$ be an irreducible representation of the group $G$ over an algebraically closed field $K$ and $\varphi\colon V \longrightarrow V$ a nonzero endomorphism of $\rho$.
Then $\varphi = \lambda \mathrm{Id}_V$ for some $\lambda \in K^\times$.
\label{lem:schur2}
\end{enumerate}
\end{lem}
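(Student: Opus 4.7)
The plan is to handle the two parts in sequence, since part \ref{lem:schur2} will reduce to part \ref{lem:schur1} plus a spectral argument.

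For part \ref{lem:schur1}, I would set $W_1 := \ker(\varphi) \subseteq V_1$ and $W_2 := \mathrm{Im}(\varphi) \subseteq V_2$. The key observation is that both are $G$-invariant: indeed, if $v \in W_1$, then by the intertwining property $\varphi(\rho_1(g)(v)) = \rho_2(g)(\varphi(v)) = \rho_2(g)(0) = 0$, so $\rho_1(g)(v) \in W_1$; and for any $w = \varphi(v) \in W_2$, we have $\rho_2(g)(w) = \varphi(\rho_1(g)(v)) \in W_2$. Now invoke irreducibility of $\rho_1$ to get $W_1 \in \{0, V_1\}$ and of $\rho_2$ to get $W_2 \in \{0, V_2\}$. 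Since $\varphi \neq 0$, we cannot have $W_1 = V_1$ nor $W_2 = 0$, so necessarily $\ker(\varphi) = 0$ and $\mathrm{Im}(\varphi) = V_2$; hence $\varphi$ is an isomorphism.

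For part \ref{lem:schur2}, I would use that, since $K$ is algebraically closed and $V$ is finite-dimensional, $\varphi$ has an eigenvalue $\lambda \in K$. Consider the map $\psi := \varphi - \lambda \, \mathrm{Id}_V$. A quick check shows $\psi$ is again a homomorphism from $\rho$ to itself: the set of such endomorphisms is a vector space, and $\lambda\,\mathrm{Id}_V$ trivially commutes with every $\rho(g)$. By construction $\psi$ has nontrivial kernel (any eigenvector of $\lambda$ lies in it), so $\psi$ is not an isomorphism. Applying part \ref{lem:schur1} with $\rho_1 = \rho_2 = \rho$ forces $\psi = 0$, i.e., $\varphi = \lambda\,\mathrm{Id}_V$. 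Finally, $\lambda \neq 0$ because $\varphi$ was assumed nonzero, so $\lambda \in K^\times$.

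The only genuinely delicate point is the invocation of an eigenvalue in part \ref{lem:schur2}, which tacitly relies on $V$ being finite-dimensional; in an infinite-dimensional setting the statement fails without further hypotheses, so I would make this assumption explicit (consistent with the paper's context, where representations are being used to define characters via traces). Beyond that, everything else is bookkeeping: the invariance of kernel and image in part \ref{lem:schur1} is immediate from the definition of a representation homomorphism, and the reduction of part \ref{lem:schur2} to part \ref{lem:schur1} is a clean one-line argument once $\psi$ is introduced.
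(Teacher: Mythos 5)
Your proof is correct and is the classical argument (invariance of kernel and image plus irreducibility for part 1; eigenvalue existence over an algebraically closed field and reduction to part 1 for part 2); the paper itself states the lemma without proof, deferring to the cited reference \cite{Ser96}, where essentially this same argument appears. Your remark that part 2 tacitly needs $V$ finite-dimensional (or some substitute hypothesis) is a fair observation, and it is consistent with the paper's context, where only finite groups and finite-dimensional complex representations are used.
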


\begin{cor}
Each irreducible complex representation of an abelian group is 1-dimensional.
\end{cor}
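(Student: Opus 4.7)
The plan is to invoke part~\ref{lem:schur2} of Schur's lemma (\cref{lem:schur}), which applies because $\mathbb{C}$ is algebraically closed. The key observation is that when $G$ is abelian, every operator $\rho(g)$ is itself an endomorphism of the representation $\rho$, i.e.\ it commutes with all $\rho(h)$, and is nonzero since it lies in $\mathrm{GL}(V)$. Schur's lemma then forces each $\rho(g)$ to be a scalar multiple of the identity, and irreducibility pins down the dimension.

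In detail, I would first let $\rho\colon G \longrightarrow \mathrm{GL}(V)$ be an irreducible complex representation with $G$ abelian, and fix an arbitrary $g \in G$. For every $h \in G$, commutativity gives $\rho(g)\rho(h) = \rho(gh) = \rho(hg) = \rho(h)\rho(g)$, so $\rho(g)\colon V \longrightarrow V$ satisfies the intertwining condition with $\rho$ on both sides; that is, $\rho(g)$ is a homomorphism from $\rho$ to itself in the sense defined above. Moreover $\rho(g) \in \mathrm{GL}(V)$, hence $\rho(g)$ is a nonzero endomorphism of $\rho$.

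Applying part~\ref{lem:schur2} of \cref{lem:schur}, there exists $\lambda_g \in \mathbb{C}^\times$ such that $\rho(g) = \lambda_g \mathrm{Id}_V$. Since this holds for every $g \in G$, the action of $G$ on $V$ is by scalars, and consequently every vector subspace $W \subseteq V$ is $G$-invariant. Irreducibility of $\rho$ then forces the only invariant subspaces of $V$ to be $\{0\}$ and $V$ itself; combined with $V \neq 0$, this yields $\dim_{\mathbb{C}} V = 1$.

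There is no real obstacle here: the statement is essentially a direct consequence of Schur's lemma once one recognizes that, in the abelian setting, the image of $\rho$ lies inside the commutant of $\rho(G)$. The only subtlety worth flagging is the use of algebraic closedness of $\mathbb{C}$, which is exactly the hypothesis needed to invoke part~\ref{lem:schur2} of the lemma; over a non-algebraically-closed field, irreducible representations of abelian groups need not be $1$-dimensional.
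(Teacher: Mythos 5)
Your proof is correct and is exactly the standard argument the paper intends: the corollary is stated as an immediate consequence of part~\ref{lem:schur2} of Schur's lemma, using that for abelian $G$ each $\rho(g)$ commutes with the whole image of $\rho$ and hence acts as a scalar, so every subspace is invariant and irreducibility forces $\dim_{\mathbb{C}} V = 1$. No gaps worth flagging beyond the algebraic-closedness point you already noted.
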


Thus, if $G$ is abelian, a complex representation $\rho\colon G \longrightarrow \mathrm{GL}(V)$ and the corresponding character $\chi_\rho\colon G \longrightarrow \mathbb{C}$ are the same function.

In addition, the set of all characters of $G$ with the operation of pointwise multiplication is a group $\hat{G}$ called the \emph{dual group} of $G$.
If $H \leq G$, then the subgroup $H^\perp = \{\chi_\rho \,|\, \chi_\rho(h) = 1,\, \forall h \in H\} \leq \hat{G}$ can be defined.

If $G$ is finite, $|H^\perp| = |G| / |H|$ and the fundamental theorem of finite abelian groups states that $G \cong \mathbb{Z}_{N_1} \times \mathbb{Z}_{N_2} \times \ldots \times \mathbb{Z}_{N_k}$, as explained in \cref{th:FTFAG}.
Furthermore, $|\chi_\rho(g)| = 1$ for each $g \in G$, since $|\chi_\rho(g^k)| = |\chi_\rho(g)|^k$ for all integers $k$.

\subsection{Polynomial-time reductions} 
\label{ssec:poly}

In this section, we introduce some basic concepts of computational complexity which are useful for the following discussion.
For a more thorough coverage of the topic, see \cite{AB09}.

\begin{defi}\label{def:bigoh}
    Let $f, g\colon \mathbb{R} \to \mathbb{R}$ be two functions.
    We write $f = O(g)$ if there exists a positive constant $c$ such that $f(x) \le c \cdot g(x)$ for every \emph{sufficiently large} $n$, i.e., if there is an integer $N$ such that, for every $n \ge N$, the previous inequality holds.
    Moreover, we write $f = \Omega(g)$ if $g = O(f)$, and we write $f = \Theta(g)$ if both $f = O(g)$ and $g = O(f)$ hold.
\end{defi}

To evaluate algorithm efficiency, we treat its running time as a function from the natural numbers to themselves that, given the input size of the algorithm, returns the number of operations (or steps) needed to compute its output.

\begin{defi}
    The \emph{running time} or \emph{complexity} of an algorithm is the number of the required operations, in function of the size of the input.
    An algorithm which takes an input of size $n$ (in bits) runs in \emph{polynomial time} if it performs $O(n^c)$ operations, for some positive constant $c$.
    An algorithm taking a size $n$ input runs in \emph{exponential time}
    if it ends after $O(c^n)$ steps for a constant $c>1$.
    A size $n$ input algorithm runs in \emph{logarithmic time} if it takes $O(\log n)$ to end, in \emph{polylogarithmic time} if its complexity is $O((\log n)^c)$, for some positive constant $c$.
    Those algorithm with running time bigger than polynomial, but still smaller than exponential, are called \emph{subexponential-time} algorithms.
\end{defi}

In cryptography, we refer to polynomial-time algorithms as \emph{efficient}.

We introduce a (partial) order relation among problems in order to compare their hardness.

\begin{defi}
    A problem $A$ \emph{polynomial-time reduces} to another problem $B$, and we denote it by $A \le_p B$, if there exists an algorithm $\mathcal{A}$ for solving $A$ which uses as subroutine an algorithm $\mathcal{B}$ for solving $B$ and runs in polynomial time if the subroutine does.
    This subroutine is also called \emph{oracle} for $B$.\\
    Two problems are \emph{computationally equivalent} if both polynomial-time reduce to the other.
\end{defi}

A polynomial-time reduction states that efficiently solving problem $B$ implies efficiently solving problem $A$, i.e., problem $B$ is harder than $A$ or, better, $A$ is at most as hard as $B$.

\subsection{Quantum computing} 
\label{ssec:quantum}

A quantum computer is a computing device that exploits the principles of quantum physics, allowing the implementation of quantum algorithms that seem to be able to solve in polynomial time certain hard problems currently intractable by classical algorithms.
In this section, we recall its basic notions exploited in the following.
For further details, see \cite{NC10}.

\begin{defi}
The building block used to store information is the quantum bit, or \textit{qubit} in short, a quantum analog of the classical bit.
It can assume two possible states, denoted by using the ket notation as $\ket{0}$ and $\ket{1}$, but thanks to the fundamental principle of quantum superposition, it can also assume multiple states simultaneously.
Thus, the state of a qubit is described as a linear combination
\[
\ket{\alpha} = a_0 \ket{0} + a_1 \ket{1}
\]
where $a_i \in \mathbb{C}$ is the \emph{probability amplitude} of the output $i$.
This is related to the \textit{measurement operator}, i.e., the fundamental feature that allows to obtain a numerical result from a qubit.
The output of this operation applied to $\ket{\alpha}$ is:
\begin{itemize}
    \item $0$ with probability $|a_0|^2$;
    \item $1$ with probability $|a_1|^2$.
\end{itemize}
Since the total probability has to be $1$, there is a constraint on the amplitudes given by $|a_0|^2 + |a_1|^2 = 1$.
\end{defi}

\begin{rem}
As a direct consequence of the given definition, all the qubits can be represented as normalized complex vectors and the obtained space is
\[
\left\{\begin{pmatrix}
     a_0 \\ a_1 
\end{pmatrix} \in \mathbb{C}^2 : |a_0|^2 + |a_1|^2 = 1
\right\}.
\]
Geometrically, this is a sphere called \textit{Bloch sphere}.

In particular, $\ket{0} = \binom{1}{0}$ and $\ket{1} = \binom{0}{1}$ form the so-called \textit{computational basis}.
\end{rem}

\begin{defi}
A \textit{register} is a set of $n$ qubits combined using the \textit{tensor} (or \textit{Kronecker}) \textit{product} $\otimes$.
The resulting representation is a complex vector of size $2^n$ containing all the obtained amplitudes.
The tensor product between orthonormal basis vectors returns the $2^n$ vectors of the computational basis, i.e., for every $i = (i_{n-1}, \ldots, i_0) \in \mathbb{Z}_2^n$, we have
\[
\ket{i} = \ket{i_{n-1} \ldots i_0} = \ket{i_{n-1}} \otimes \ldots \otimes \ket{i_0}.
\]
As an alternative notation, orthonormal basis vectors can be written as the integers modulo $2^n$ that they represent as a binary array $i = \sum_{j = 0}^{n - 1} 2^j i_j \in \mathbb{Z}_{2^n}$.
\end{defi}

\begin{rem}
For example, when $n = 2$, the state resulting from the combination of $\ket{\alpha} = a_0 \ket{0} + a_1 \ket{1}$ and $\ket{\beta} = b_0 \ket{0} + b_1 \ket{1}$ is
\[
\ket{\alpha} \otimes \ket{\beta} = a_0 b_0 \ket{00} + a_0 b_1 \ket{01} + a_1 b_0 \ket{10} + a_1 b_1 \ket{11}.
\]
where $\sum_{i,j = 0}^1 |a_i b_j|^2 = 1$.
Here, the new orthonormal basis vectors are 
\begin{gather*}
    \ket{00} = \ket{0}\otimes\ket{0} = \ket{0}, \\
    \ket{01} = \ket{0}\otimes\ket{1} = \ket{1}, \\
    \ket{10} = \ket{1}\otimes\ket{0} = \ket{2}, \\
    \ket{11} = \ket{1}\otimes\ket{1} = \ket{3}.
\end{gather*}

Generally, if at least one of the states can assume only one possible value, then the tensor product can be omitted.
\end{rem}
\begin{rem}
We extend the use of notation $\ket{\cdot}$ in the following way: let $x$ be any mathematical object that can be stored in a computer and assume that we fix a binary codification $B(x)$ for $x$, then $\ket{x}:=\ket{B(x)}$.
\end{rem}

\begin{defi}
A \emph{unitary operator} is a complex square matrix $U$ such that $U U^* = U^* U = I$, where $U^*$ denotes the conjugate transpose of the matrix $U$ (also called the Hermitian adjoint in quantum mechanics, denoted as $U^\dagger$).
\end{defi}

\begin{rem}
Since qubits must maintain their unitary norm, in quantum algorithms only functions that can be converted in unitary operators can be used.
Given a non-quantum function $f$, its representation as a unitary operator suited for quantum computing (if exists) is usually denoted as $U_f$.
A common strategy to construct $U_f$ such that it is always reversible is to allow two registers in input: the first one will be the actual input of $f$ while the second one will contain its output, so that applying $U_f$ results in a superposition on input and output.
\end{rem}

\begin{defi}
The building blocks of quantum computing are the \textit{quantum logic gates}, which are the quantum analogous to the classical logic gates.
Here are some notable examples:
\begin{itemize}
\item Identity gate $\mathcal I\colon\ket{\alpha} \longmapsto \ket{\alpha}$, with matrix representation on a single qubit
\[
\mathcal I = \begin{pmatrix}
    1 & 0 \\
    0 & 1
\end{pmatrix}.
\]
\item Pauli gates $\mathcal X, \mathcal Y, \mathcal Z$ represent rotations by $\pi$ radians of a qubit around the $x, y, z$ axes of the Bloch sphere, respectively.
The gate $\mathcal X$ is the quantum analogous of the NOT gate.
Their matrix representations are
\[
\mathcal X = \begin{pmatrix}
    0 & 1 \\
    1 & 0
\end{pmatrix} = \mathcal{NOT}, \qquad
\mathcal Y = \begin{pmatrix}
    0 & -\im \\
    \im & 0
\end{pmatrix}, \qquad
\mathcal Z = \begin{pmatrix}
    1 & 0 \\
    0 & -1
\end{pmatrix}.
\]
\item Hadamard gate allows to create the uniform superposition, i.e., $50\%$ output $0$ and $50\%$ output $1$.
In particular
\[
\mathcal H\colon\ket{0} \longmapsto \frac{1}{\sqrt{2}}(\ket{0}+\ket{1}), \quad
\ket{1} \longmapsto \frac{1}{\sqrt{2}}(\ket{0}-\ket{1}),
\]
and its matrix representation is
\[
\mathcal H = \frac{1}{\sqrt{2}}\begin{pmatrix}
    1 & 1 \\
    1 & -1
\end{pmatrix}.
\]
\item Phase shift gates are rotations about the $z$-axis on the Bloch sphere by $\varphi$ radians, i.e., 
\[
\mathcal P_\varphi\colon\ket{0} \longmapsto \ket{0}, \quad
\ket{1} \longmapsto \ex{\varphi \im} \ket{1}.
\]
They do not change the probability of the measurement but only the phase of the quantum state.
The matrix representation is
\[
\mathcal P_\varphi = \begin{pmatrix}
    1 & 0 \\
    0 &  \ex{\varphi \im}
\end{pmatrix}.
\]
A useful particular case is when $\varphi = 2 \pi / 2^k$, that gives the dyadic rational phase shift gate, denoted as
\[
\mathcal R_k = \begin{pmatrix}
    1 & 0 \\
    0 &  \ex{\frac{2 \pi}{2^k} \im}
\end{pmatrix}.
\]
\item Swap gate takes two qubits and swaps their values, it is given by
\[
\mathcal{SWAP} = \begin{pmatrix}
1 & 0 & 0 & 0 \\
0 & 0 & 1 & 0 \\
0 & 1 & 0 & 0 \\
0 & 0 & 0 & 1
\end{pmatrix}.
\]
\item Each gate can be \textit{controlled}, meaning that one or more qubits act as enabler of the operation.
Specifically, if an operator $\mathcal U$ acts on one qubit and is controlled by another qubit, then it is applied only when the control qubit is $1$, where usually the first qubit is the control qubit.
In formulas the controlled $\mathcal U$, denoted by $\mathcal{CU}$, acts as
\begin{align*}
& \ket{00} \longmapsto \ket{00}, \\
& \ket{01} \longmapsto \ket{01}, \\
& \ket{10} \longmapsto \ket{1} \otimes \mathcal U \ket{0} = \ket{1} (u_{00} \ket{0} + u_{10} \ket{1}), \\
& \ket{11} \longmapsto \ket{1} \otimes \mathcal U \ket{1} = \ket{1} (u_{01} \ket{0} + u_{11} \ket{1}),
\end{align*}
and in matrix notation
\[
\mathcal{CU} = \begin{pmatrix}
1 & 0 & 0 & 0 \\
0 & 1 & 0 & 0 \\
0 & 0 & u_{00} & u_{01} \\
0 & 0 & u_{10} & u_{11}
\end{pmatrix}.
\]
\item Measurement $\mu$ allows to obtain a classical binary output from a quantum state, collapsing the superposition to a single value depending on the quantum probabilities.
Since the input is impossible to retrieve, this is an irreversible operation and hence it is not a quantum gate.
\end{itemize}
\end{defi}

\begin{defi} \label{def:dft}
Let $\omega = \ex{\frac{2 \pi \im}{N}}$ be a primitive $N$-th root of unity, the \emph{Discrete Fourier Transform} (DFT) is
\begin{equation*}
\mathbb{C}^N \ni \begin{pmatrix}
v_0 \\
v_1 \\
v_2 \\
\vdots \\
v_{N - 1}
\end{pmatrix}=v \mapsto F_N v 
= \frac{1}{\sqrt{N}} \begin{pmatrix}
1 & 1 & 1 & \ldots & 1 \\
1 & \omega & \omega^2 & \ldots & \omega^{N - 1} \\
1 & \omega^2 & \omega^4 & \ldots & \omega^{2 (N - 1)} \\
\vdots & \vdots & \vdots & \ddots & \vdots \\
1 & \omega^{N - 1} & \omega^{2 (N - 1)} & \ldots & \omega^{(N - 1)^2}
\end{pmatrix} 
\begin{pmatrix}
v_0 \\
v_1 \\
v_2 \\
\vdots \\
v_{N - 1}
\end{pmatrix}
,\end{equation*}
so that the $j$-th component of the resulting vector is
\begin{equation*}
    \frac{1}{\sqrt{N}} \sum_{k = 0}^{N - 1} \omega^{j k} v_k, 
    \qquad \text{for }0 \leq j < N.
\end{equation*}
\end{defi}

By simply performing the matrix product, it is easy to prove that the matrix in the DFT is unitary.
Consequently, the inverse of the DFT of $v \in \mathbb{C}^N$ is simply the vector of components
\begin{equation*}
    \frac{1}{\sqrt{N}} \sum_{k = 0}^{N - 1} \omega^{- j k} v_k,
    \qquad \text{for } 0 \leq j < N.
\end{equation*}

\begin{rem}
Despite the conventional definition of the DFT has negative exponents, i.e., is the formula introduced here as its inverse, this work defines the DFT with positive exponents in order to be coherent with its quantum alternative.
\end{rem}

\begin{defi} \label{def:qft}
Let $N = 2^n$ and $\omega = \ex{\frac{2 \pi \im}{N}}$ be a primitive $N$-th root of unity.
The \emph{Quantum Fourier Transform} (QFT) of a quantum state $\ket{\alpha} = \sum_{k = 0}^{N-1} a_k \ket{k}$ is defined as
\begin{equation*}
    \mathcal F_N \colon\ket{\alpha} \longmapsto F_N \ket{\alpha},
\end{equation*}
so that the components of the resulting state are
\begin{equation*}
    \frac{1}{\sqrt{N}} \sum_{k = 0}^{N - 1} \omega^{j k} a_k,
    \qquad \text{for } 0 \le j < N.
\end{equation*}
\end{defi}

Again, the inverse of the QFT of $\ket{\alpha}$ is simply the state of the components
\begin{equation*}
    \frac{1}{\sqrt{N}} \sum_{k = 0}^{N - 1} \omega^{- j k} a_k,
    \qquad \text{for } 0 \le j < N.
\end{equation*}

\begin{rem}
The QFT can be implemented using $O(n^2)$ quantum gates of only three types, namely the Hadamard gate, the controlled dyadic rational phase shift gate and the swap gate.
\end{rem}

\subsection{Hidden Subgroup Problem}
\label{ssec:hsp}

Here we recall the definition of the Hidden Subgroup Problem, which has very important consequences in cryptanalysis since it can be related to many important problems, including those at the basis of the security of public-key cryptosystems.
The following sections address different contexts, considering the two main cases in which the group is either abelian or non-abelian.
In both cases, we examine the most relevant mathematical problems, from a cryptographic point of view, that can be reduced to an instance of the Hidden Subgroup Problem.

\begin{defi}
A function $f$ defined on a group $G$ \emph{hides} $H \leq G$ when
\begin{equation*}
f(g') = f(g) 
\qquad \text{if and only if}\qquad g' \in gH;
\end{equation*}
i.e., $f$ is constant on each coset of $H$ but gives distinct values on distinct cosets.
\end{defi}

\begin{defi}
Given a group $G$, a set $S$, and a function $f\colon G \longrightarrow S$ that hides $H \leq G$, the \emph{Hidden Subgroup Problem} (HSP) consists in finding a set of generators of the subgroup $H$.
\end{defi}

\section{Abelian HSP}
\label{sec:abel}

When the group $G$ is finite and abelian, there is a \emph{standard method} to solve the HSP, firstly introduced in 1995 in an unpublished work \cite{Kit95} by Kitaev.
This quantum routine generalizes the well known Shor's algorithm \cite{Sho94} that allows quantum computers to efficiently solve the main problems behind classical public-key cryptosystems.
In particular, the procedure applies the QFT to the group and, to do so, it exploits results from representation theory.

As observed in \cref{th:FTFAG}, a finite abelian group is isomorphic to $\mathbb{Z}_{N_1} \times \ldots \times \mathbb{Z}_{N_\ell}$.

When $\ell = 1$, we have that $G \cong \mathbb{Z}_N$ and the QFT is defined by a matrix of size $N \times N$.
If we do not consider the normalizing factor $1 / \sqrt{N}$, its $j$-th row can be written as
\begin{align*}
\chi_j\colon \mathbb{Z}_N &\longrightarrow \mathbb{C}, \\
k &\longmapsto \omega^{j k}.
\end{align*}
Since $\chi_j(h + k) = \chi_j(h) \chi_j(k)$, these are 1-dimensional representations of $\mathbb{Z}_N$ and, in particular, they are all the characters of $\mathbb{Z}_N$, giving the dual group $\hat{\mathbb{Z}}_N$.

Thus, the QFT maps the standard basis of $\mathbb{C}^N$ into the orthonormal basis corresponding to the characters, in formulas
\begin{equation*}
\mathcal F_N\colon \ket{j} \longmapsto \ket{\chi_j} 
= \frac{1}{\sqrt{N}} \sum_{k = 0}^{N - 1} \chi_j(k) \ket{k} 
= \frac{1}{\sqrt{N}} \sum_{k = 0}^{N - 1} \omega^{j k} \ket{k}.
\end{equation*}

\begin{rem}
\label{rem:qft}
In \cref{def:qft}, the QFT is introduced as an operator of size \mbox{$N = 2^n$}, while here it is applied to a generic group with arbitrary size.
Until this size is known, there is an approximation of the QFT that can be applied to odd-dimensional spaces, see \cite{Lom07}.
In the following, quantum operators are applied to groups of arbitrary size omitting this approximation for simplicity.
A different problem arises when the size is unknown (e.g., \cref{ssec:ofp}), in which case adequate solutions should be adopted.
\end{rem}

When $\ell > 1$, the $|G| = N_1 N_2 \ldots N_m$ characters are products of the characters of the cyclic groups $\mathbb{Z}_{N_i}$.
Since the characters are pairwise orthogonal, the QFT corresponding to $G$ is the tensor product of the QFTs for the individual cyclic groups.
For example, the QFT corresponding to $\mathbb{Z}_2$ is the Hadamard gate $\mathcal H$, so the QFT corresponding to $\mathbb{Z}_2^n$ is $\mathcal H^{\otimes n}$ (which is very different from the QFT corresponding to $\mathbb{Z}_{2^n}$).

\begin{prop}\cite{Kit95}
\label{th:kitaev}
Every instance of HSP over a finite abelian group $G$ with an efficiently-implementable function $f \colon G \longrightarrow S$ that hides $H \leq G$ can be solved by a polynomial quantum algorithm.

In particular, if $n = |G|$ and $m = |S|$, a set of generators of the hidden subgroup $H$ can be found with error probability less than $1/3$ by classically processing $O(n m \log n m)$ outputs obtained after the following quantum steps.
\begin{algo}\label{algo:AHSP}\hfill\vspace{-0.3cm}
\begin{enumerate}
\item Initiate a register (or more) representing elements in $G$ and an auxiliary register (or more) for elements in $S$, respectively, both in the state $\ket{0}$.

\item Apply Hadamard on the group register, i.e.,
\begin{equation*}
\left( \mathcal H^{G} \otimes \mathcal I^{S} \right) (\ket{0}^{\otimes n + m}) 
= \frac{1}{\sqrt{|G|}} \sum_{g \in G} \ket{g} \ket{0}^{\otimes m}.
\end{equation*}

\item Compute the unitary operator $U_f$, representing the black-box function $f$, that uses the group register as input and saves the output in the auxiliary register, thus obtaining

\begin{equation*}
\mathcal U_f \left( \frac{1}{\sqrt{|G|}} \sum_{g \in G} \ket{g} \ket{0}^{\otimes m} \right) = \frac{1}{\sqrt{|G|}} \sum_{g \in G} \ket{g} \ket{f(g)}.
\end{equation*}

\item Measure the auxiliary register, so that it assumes some value $f(\bar{g})$ for an unknown $\bar{g} \in G$, while the group register collapses to a superposition over the $g \in G$ with the same $f$-value as $\bar{g}$
\begin{equation*}
\left(\mathcal I^{G} \otimes \mu^{S} \right) \left( \frac{1}{\sqrt{|G|}} \sum_{g \in G} \ket{g} \ket{f(g)} \right) 
= \frac{1}{\sqrt{|H|}} \sum_{h \in H} \ket{\bar{g} \cdot h}.
\end{equation*}

\item Apply the QFT corresponding to $G$ to the obtained state, which returns
\begin{equation*}
\mathcal F_{|G|} \left( \frac{1}{\sqrt{|H|}} \sum_{h \in H} \ket{\bar{g} \cdot h} \right) 
= \frac{1}{\sqrt{|H|}} \sum_{h \in H} \ket{\chi_{\bar{g} \cdot h}}.
\end{equation*}
\item Measure and output the result.
\end{enumerate}
\end{algo}
\end{prop}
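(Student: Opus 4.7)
The plan is to analyze the quantum subroutine step-by-step, reducing the claim to a single character-sum calculation at Step 5; then treat the remaining issues (sample complexity and recovery of $H$) as classical post-processing.

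First I would verify Steps 1--3 by linearity of the QFT and the tensor-product structure of the initial state; these are routine. Step 4 uses the hiding property: since $f(g')=f(g)$ iff $g'\in gH$, projecting onto the measured value $f(\bar{g})$ collapses the first register exactly to the uniform superposition over the coset $\bar{g}H$ for a uniformly random $\bar{g}\in G$.

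The heart of the argument is Step 5. Using $\mathcal{F}_{|G|}\ket{x}=\frac{1}{\sqrt{|G|}}\sum_{y\in G}\chi_y(x)\ket{y}$, which extends to $G$ as the tensor product of the QFTs on the cyclic factors guaranteed by \cref{th:FTFAG}, I would expand to
\begin{equation*}
\mathcal{F}_{|G|}\Bigl(\tfrac{1}{\sqrt{|H|}}\sum_{h\in H}\ket{\bar{g}\cdot h}\Bigr)=\frac{1}{\sqrt{|H||G|}}\sum_{y\in G}\chi_y(\bar{g})\Bigl(\sum_{h\in H}\chi_y(h)\Bigr)\ket{y}.
\end{equation*}
By orthogonality of characters the inner sum equals $|H|$ when $\chi_y\in H^\perp$ and vanishes otherwise, so each surviving amplitude has modulus $\sqrt{|H|/|G|}=1/\sqrt{|H^\perp|}$. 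Since $|\chi_y(\bar{g})|=1$, a measurement returns a \emph{uniformly random} element of $H^\perp\leq\hat{G}$, independently of the unknown coset representative $\bar{g}$.

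Each execution therefore produces an independent uniform sample from $H^\perp$. A standard counting argument then shows that $O(\log|G|)$ samples generate $H^\perp$ with probability at least $2/3$: any proper subgroup of $H^\perp$ has index at least $2$, so the chance that $k$ samples lie in a fixed maximal subgroup is at most $2^{-k}$, and a union bound over the polynomially many maximal subgroups (controlled by the prime factorization of $|G|$) closes the estimate. From generators of $H^\perp$ one recovers $H=(H^\perp)^\perp$ by solving a linear system over the cyclic factors of $G$ via Smith normal form, which is classical polynomial time.

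The hardest point is making the quantum step rigorous when the sizes $N_i$ are not powers of $2$: the exact $\mathcal{F}_{|G|}$ is not implementable on qubits and one must substitute the approximation mentioned in \cref{rem:qft}, then bound the total-variation distance between the ideal and approximate sampling distributions on $H^\perp$ well enough to preserve the $1/3$ error bound. A secondary, more bookkeeping obstacle is tracking the precise polynomial factors in $|G|$, $|S|$ to match the $O(|G||S|\log|G||S|)$ bound claimed, which comes from combining the per-sample error, the sampling complexity for generating $H^\perp$, and the cost of the classical Smith normal form computation.
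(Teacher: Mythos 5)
Your proposal follows essentially the same route as the paper: Steps 1--4 give a uniform coset state, Step 5 is analyzed via the character-sum orthogonality relation (the paper's equation for $\sum_{h\in H}\chi_g(h)$) showing the measurement samples uniformly from labels of $H^\perp$, and $H$ is then recovered by classical post-processing on the resulting constraints $\chi_g(h)=1$. The extra details you supply (the $O(\log|G|)$ sample-counting argument, recovery of $H=(H^\perp)^\perp$ via Smith normal form, and the approximate QFT for non-dyadic orders) are consistent with what the paper defers to its remarks and references, so the argument is correct and matches the paper's treatment.
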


\begin{rem}
Here are some considerations on \cref{algo:AHSP} above:
\begin{itemize}
\item Step 2 generates in the group register a uniform superposition over the elements of $G$.

\item In Step 3 the uniform superposition is extended to the auxiliary register, so that it contains the image through $f$ of the $g$ in the group register.

\item After Step 4, the state is a uniform superposition over all the elements in the coset $\bar{g} H$ of $H$.

\item Step 5 returns a superposition over the labels of $H^\perp$ given by
\begin{align*}
\frac{1}{\sqrt{|H|}} \sum_{h \in H} \ket{\chi_{\bar{g} \cdot h}} 
&= \frac{1}{\sqrt{|H| |G|}} \sum_{h \in H} \sum_{g \in G} \chi_{\bar{g} \cdot h}(g) \ket{g} \\
&= \frac{1}{\sqrt{|H| |G|}} \sum_{g \in G} \chi_{\bar{g}}(g) \sum_{h \in H} \chi_h(g) \ket{g}.
\end{align*}
Since, for each $g \in G$, the character $\chi_g$ restricted to $H$ is a character of $H$, and because it is constant of value 1 if $\chi_g \in H^\perp$, the following equation holds
\begin{equation}\label{eq:char}
\sum_{h \in H} \chi_h(g) = \sum_{h \in H} \chi_g(h) = \begin{cases}
|H| & \text{if } \chi_g \in H^\perp, \\
0 & \text{if } \chi_g \notin H^\perp,
\end{cases}
\end{equation}
and the resulting state is
\begin{equation*}
\sqrt{\frac{|H|}{|G|}} \sum_{g \,|\,\chi_g \in H^\perp} \chi_g(\bar{g}) \ket{g}.
\end{equation*}
The phases (i.e., the coefficients of the quantum states) do not affect the probabilities of the measurement because $|\chi_g(\bar{g})|^2 = 1$.
Thus, the obtained superposition is uniform on the labels of elements of $H^\perp$.

\item Each sampled element $g \in G$ gives a constraint on $H$ because $\chi_g(h) = 1$ for each $h \in H$.
Hence, information to find the generators of $H$ itself can be obtained after \emph{ad-hoc} computations on a sufficient number of samples, depending on the specific instance of HSP.
\end{itemize}
\end{rem}

\begin{defi}
The method described above is known as \emph{standard method} and the quantum routine on which it relies is called \emph{Fourier sampling}.
The first four steps of the routine form a process called \emph{coset sampling} because the final state is a superposition over all the elements in a coset of $H$.
\end{defi}

In 2017, an unpublished work by Gogioso and Kissinger \cite{GK17} gave an interesting and detailed graphical diagrammatic proof of the correctness for the abelian HSP algorithm.

New developments on abelian HSP were obtained in \cite{II22}, where the authors introduce a polynomial time exact quantum algorithm for $\mathbb{Z}_{m^k}^n$.
They also present applications to compute the structure of abelian and solvable groups whose order has the same prime factors as $m$.

In the following, the main mathematical problems that can be reduced to the abelian HSP are described.

\subsection{Simon's problem} 
\label{ssec:simon}

Simon's problem is one of the first examples of problems that can be solved in polynomial time by a quantum algorithm \cite{Sim94}, while classical algorithms are exponential.
Here we introduce the formulation of the problem and we highlight how it can be seen as an instance of HSP with an abelian group.

\begin{defi}
Let $m \ge n$ be positive integers and $f \colon \mathbb{Z}_2^n \longrightarrow \mathbb{Z}_2^m$ be a function satisfying the following property:
\emph{there is $x \in \mathbb{Z}_2^n$ such that, for each $g \neq g' \in \mathbb{Z}_2^n$,}
\begin{equation*}
f(g) = f(g') \quad\text{if and only if}\quad g' = g \oplus x,
\end{equation*}
where $\oplus$ is the bitwise exclusive or (XOR).
The \textit{Simon's problem} consists of finding the bit string $x$.
\end{defi}

It is easy to formulate the Simon's problem as an instance of HSP with the group $(\mathbb{Z}_2^n, \oplus)$.
Since for every $g, g' \in \mathbb{Z}_2^n$ it is true that
\begin{equation*}
f(g) = f(g') 
\quad\text{if and only if}\quad g' \in \{g, g \oplus x\} 
\quad\text{if and only if}\quad g' \in g \oplus \{0, x\},
\end{equation*}
the function $f$ hides the subgroup $H = \{0, x\}$ of $\mathbb{Z}_2^n$.
In particular, finding $H$ means solving the Simon's problem.

In \cref{algo:AHSP}, the QFT corresponding to $\mathbb{Z}_2^n$ is simply $\mathcal H^{\otimes n}$ and the $2^n$ characters are $\chi_{g}(g') = (-1)^{g \cdot g'}$, where 
\begin{equation*}
g \cdot g' := g_1 g'_1 \oplus \ldots \oplus g_n g'_n,\quad\text{for }g=(g_1,\ldots,g_n),g'=(g'_1,\ldots,g'_n)\in\mathbb Z_2^n.
\end{equation*}

The algorithm then uniformly samples from labels of elements of
\begin{equation*}
H^\perp 
= \{\chi_g \,|\, \chi_g(h) = (-1)^{g \cdot h} = 1, \;  h \in H\} 
= \{\chi_g \,|\, g \cdot x \equiv 0 \text{ (mod } 2)\}.
\end{equation*}
After $O(n)$ samples, the $n - 1$ linearly independent equations $g \cdot x \equiv 0 \text{ (mod } 2)$ can be solved using Gaussian elimination in order to find $x$.

Recently, a strand of scientific literature aims at finding better quantum strategies for solving Simon's problem and its generalizations, in order to decrease the number of qubits required by algorithms \cite{KNP07, TXQ22, LQL24, SKZ25}.

\subsection{Order finding and integer factorization}
\label{ssec:ofp}

The order finding problem is one of the crucial hard problems on which the security of several cryptosystems is based.
Indeed, it is well known that the factors of an integer $N$ can be efficiently computed if the order of the elements in $\mathbb Z_N^\times$ can be found.
We start recalling how the integer factorization problem can be easily reduced to the order finding problem.
Then, we show that the order finding problem can be seen as a HSP with an abelian group and consequently solved in polynomial time by quantum computers.

\begin{defi}
Given an element $a$ of a finite group $(A, \cdot)$, finding its order, i.e., the smallest positive integer $x$ such that
\begin{equation*}
 a^x = 1,
\end{equation*}
is the \textit{Order Finding Problem} (OFP).
\end{defi}

\begin{defi}
The \textit{Integer Factorization Problem} (IFP) consists in finding the prime factors of a given integer $N$.
\end{defi}

Excluding the particular instance where $N$ is the power of an odd prime, the IFP problem can be reduced to the OFP through the following classical routine.
\begin{algo}\hfill\vspace{-0.3cm}
\begin{enumerate}
\item Pick a random integer $1 < a < N$.
\item If $a$ is not coprime with $N$, then $\gcd(a, N)$ is a non-trivial factor of $N$ and the problem is solved (if necessary, repeat the procedure for other factors).
\item Otherwise, find the order $x \le \varphi(N)$ of $a \in \mathbb{Z}_N^\times$, where $\varphi(N)$ is the Euler totient function, i.e., solve the OFP.
\item If $x$ is odd, then return to Step~1.
\item If $x$ is even, then $a^{\frac{x}{2}} \not\equiv 1 \pmod{N}$, for otherwise the order is less than $\tfrac{x}{2}$.
\item If $a^{\frac{x}{2}} \equiv -1 \pmod{N}$, then return to Step~1.
\item Otherwise, $N$ divides $a^x - 1 = (a^{\frac{x}{2}} + 1) (a^{\frac{x}{2}} - 1)$ and none of the factors is a multiple of $N$.
Thus, a non-trivial factor of $N$ is provided by $\gcd(a^{\frac{x}{2}} + 1, N)$ or $\gcd(a^{\frac{x}{2}} - 1, N)$ and the problem is solved (if necessary, repeat the procedure for other factors).
\end{enumerate}
\end{algo}

If $N$ factors as $pq$, with $p$ and $q$ primes, which is the most useful case in cryptography, the probability of returning at Step~1 is at most $1 / 2$ since:
\begin{itemize}
\item By Chinese Remainder Theorem, choosing a uniformly random $a \in \mathbb{Z}_N$ is equivalent to pick $a_p \in \mathbb{Z}_p$ and $a_q \in \mathbb{Z}_q$ independently.
If $x_p$ and $x_q$ are their respective orders, then $(a_p, 1)$ generates the subgroup of size $x_p$ of the group of size $x$ generated by $(a_p, a_q)$, and analogously for $(1, a_q)$.
Thus, Lagrange's theorem implies that $x_p$ and $x_q$ divide $x$, which means that if $x$ is odd, then $x_p$ and $x_q$ must be odd.
The probability that $x_p$ (or $x_q$) is odd is $1 / 2$ because $\mathbb{Z}_p^\times$ is cyclic of even size, so half of its elements are squares.
Hence, the probability that $x$ is odd is at most $(1 / 2)^2 = 1 / 4$.
\item If $x$ is even and $a^{\frac{x}{2}} \equiv -1 \pmod{N}$, then $a^{\frac{x}{2}} \equiv -1 \pmod{p}$ and $a^{\frac{x}{2}} \equiv -1 \pmod{q}$.
This event has probability at most $(1 / 2)^2 = 1 / 4$.
\end{itemize}
Collecting the two results allows to obtain
\begin{equation*}
\mathrm{Pr}[\text{return to Step 1}] = \mathrm{Pr}[x \text{ odd}] + \mathrm{Pr}[a^{\frac{x}{2}} \equiv -1] \leq \frac{1}{4} + \frac{1}{4} = \frac{1}{2}.
\end{equation*}

Through the described classical reduction, the integer factorization can be solved by finding the order $x$ of the sampled integer $a \in \mathbb{Z}_N^\times$.

Now we see that it is possible to formulate the OFP as an instance of HSP.
Consider the group $(\mathbb{Z}_{|A|}, +)$ and the function $f\colon \mathbb{Z}_{|A|} \longrightarrow A$ defined by $f(g) = a^g$, for a fixed $a \in \mathbb Z_N^\times$.
Since for $g, g' \in \mathbb{Z}_{|A|}$
\begin{equation*}
f(g') = f(g) 
\quad\text{if and only if}\quad g' = g + j x 
\quad\text{if and only if}\quad g' \in g + \langle x \rangle,
\end{equation*}
$f$ hides the subgroup $H = \langle x \rangle = \{0, x, 2 x, \ldots, (|A| - 1) x \}$, where $x$ is the order of $a$.

In this case, the state at Step~4 of \cref{algo:AHSP} is
\begin{equation*}
\frac{1}{\sqrt{|A|}} \sum_{j = 0}^{|A| - 1} \ket{\bar{g} + j x},
\end{equation*}
and moving to Step~5, i.e., applying the QFT to the group register, results in
\begin{equation*}
\frac{1}{\sqrt{|A|}} \sum_{j = 0}^{|A| - 1} \frac{1}{\sqrt{|A|}} \sum_{g = 0}^{|A| - 1} e^{\frac{2 \pi \im}{|A|} (\bar{g} + j x) g} \ket{g}=\frac{1}{|A|} \sum_{g = 0}^{|A| - 1} e^{\frac{2 \pi \im}{|A|} \bar{g} g} \sum_{j = 0}^{|A| - 1} e^{\frac{2 \pi \im}{|A|} j x g} \ket{g}
\end{equation*}
where the probability of picking an element depends on the value of the inner sum that is
\begin{equation*}
\sum_{j = 0}^{|A| - 1} \left( e^{\frac{2 \pi \im}{|A|} g x} \right)^j= 
\begin{cases}
|A| & \text{if } g x \equiv 0 \bmod{|A|}, \\
0 & \text{otherwise.}
\end{cases}
\end{equation*}
Thus, the Fourier sampling picks labels from those in $H^\perp = \{\chi_g: gx/|A| \in \mathbb{Z}\}$, which means that the output is $g = k|A|/x$.
Consequently, $g$ is an integer multiple of $|A|/x$ and $k$ satisfies $0 \le k < x$, since $g<|A|$.
Since only $|A|$ and the measured $g$ are known, it is possible to obtain $g / |A| = k / x$.
By adjusting the size of the group register, the procedure can provide good approximations $\alpha$ of $k/x$  whose quality is given by
$$\left|\alpha - \cfrac{k}{x}\right| < \cfrac{1}{2x^2},$$
see \cite[pag.~229]{NC10}.
Thus, $k/x$ is a convergent of the continued fraction expansion of the sampled value $\alpha$, see \cite[Theorem 10.18]{Ros86}
Note that the convergents of a continued fraction are in lowest terms, and therefore the actual value of the order of $a$ could be a multiple of the denominator of a convergent.
Hence, all its multiples smaller than $\varphi(N)$ must be tested to find the value of $x$.
Overall, the number of denominators of the convergents to test is $\log |A|$, because computing the continued fraction expansion of a rational number is derived from the Euclidean algorithm.
Then, for each of them, the number of multiples to check is $\varphi(N)/|A|$.

\subsection{Discrete logarithm} 
\label{ssec:dlp}

The discrete logarithm problem is one of the main problems whose computational hardness is crucial for the security of many public-key cryptosystems and key exchange protocols.

\begin{defi}
Given a generator $a$ of a cyclic group $(A,\cdot)$ of size $N$ and $b \in A$, the \textit{Discrete Logarithm Problem} (DLP) is finding the unique $0 \le x < N$ such that $a^x = b$.
\end{defi}

The DLP is an instance of the HSP with the group $(\mathbb{Z}_N \times \mathbb{Z}_N, +)$ and the function $f\colon \mathbb{Z}_N \times \mathbb{Z}_N \longrightarrow A$ defined by
\begin{equation*}
f(g_1, g_2) := a^{g_1} b^{-g_2} = a^{g_1 - g_2 x}.
\end{equation*} 

In particular, for $g = (g_1, g_2), g' = (g'_1, g'_2) \in \mathbb{Z}_N \times \mathbb{Z}_N$ we have
\begin{equation*}
f(g') = f(g)
\quad \text{if and only if}\quad  g'_1 - g_1 \equiv (g'_2 - g_2) x \bmod N,
\end{equation*}
that, for a suitable $k \in \mathbb Z_N$, is equivalent to the system
\begin{equation*}
\begin{cases}
g_1' - g_1 \equiv kx\bmod N, \\
g_2' - g_2 \equiv k\bmod N.
\end{cases}
\end{equation*}
Hence,
\begin{equation*}
f(g') = f(g) 
\quad \text{if and only if}\quad g' \in g + \langle (x, 1) \rangle,
\end{equation*}
and $f$ hides the subgroup
\[
H = \langle (x, 1) \rangle = \{(0, 0),(x, 1), (2 x, 2), \ldots, ((N-1) x, N-1)\}.
\]
In the coset sampling in \cref{algo:AHSP}, $\mathbb{Z}_N\times\mathbb{Z}_N$ is modeled using two registers and, after measuring the auxiliary register, the state of the group registers at Step~4 is
\begin{equation*}
\frac{1}{\sqrt{N}} \sum_{k = 0}^{N - 1} \ket{\bar{g}_1 + k x} \ket{\bar{g}_2 + k}.
\end{equation*}
After the QFT corresponding to $\mathbb{Z}_N \times \mathbb{Z}_N$, the resulting state in Step 5 is
\begin{align*}
&\frac{1}{\sqrt{N}} \sum_{k = 0}^{N - 1} 
\left(\!\frac{1}{\sqrt{N}}\!\sum_{g_1 = 0}^{N - 1} \ex{\frac{2 \pi \im}{N} (\bar{g}_1 + k x) g_1}\!\ket{g_1}\!\right)\!
\left(\!\frac{1}{\sqrt{N}}\!\sum_{g_2 = 0}^{N - 1} \ex{\frac{2 \pi \im}{N} (\bar{g}_2 + k) g_2}\!\ket{g_2}\!\right) \\
&=\frac{1}{N \sqrt{N}} \sum_{g_1, g_2 = 0}^{N - 1} \ex{\frac{2 \pi \im}{N} (\bar{g}_1 g_1 + \bar{g}_2 g_2)} \sum_{k = 0}^{N - 1} \ex{\frac{2 \pi \im}{N} (g_1 x + g_2)}^k \ket{g_1} \ket{g_2},
\end{align*}
where the probability of picking an element depends on the value of the inner sum that is geometric over $k$.
Therefore,
\begin{equation*}
\sum_{k = 0}^{N - 1} \ex{\frac{2 \pi \im}{N} (g_1 x + g_2)}^k 
= \begin{cases}
N & \text{if } g_1 x + g_2 \equiv 0 \bmod{N}, \\
0 & \text{otherwise.}
\end{cases}
\end{equation*}
Thus, possible outputs have $g_2 = - g_1 x+jN$, with $j\in\mathbb Z$, that results in the state
\begin{equation*}
\frac{1}{\sqrt{N}} \sum_{g_1 = 0}^{N - 1} \ex{\frac{2 \pi \im}{N} (\bar{g}_1 - \bar{g}_2 x) g_1} \ket{g_1} \ket{- g_1 x},
\end{equation*}
and the Fourier sampling returns a label from those in $H^\perp = \{\chi_{(g_1, - g_1 x)} \,|\, g_1 \in \mathbb{Z}_N\}$, so that the solution of the DLP can be found directly from every sample with invertible first component.

\begin{rem}
    For the sake of completeness, we highlight that the DLP can also be seen as an instance of the OFP and, consequently, be solved with a strategy similar to the one in the previous section.
    In particular, the OFP can be recast as a \emph{period finding problem}, which asks to compute the period of a given periodic function, and the DLP can be solved by using the knowledge of the period of the function $f\colon \mathbb{Z}_N \times \mathbb{Z}_N \longrightarrow A$ defined above. Assume that $f$ is of period $(m,n)$, for some $m,n \in \mathbb{Z}_N$. Actually, we can also suppose that $m, n \in \mathbb Z^\times$, otherwise we have found a non-trivial factor of $N$. 
    Thus, we have 
    \[ f(g_1 + m, g_2 + n) = f(g_1, g_2), \]
    i.e.,
    \[a^{g_1+m-(g_2+n)x}=a^{g_1-g_2x}\]
    which means that
    \[a^{m+nx} = 1_A.\]
    Hence, we have 
    \[m+nx \equiv 0 \bmod{N}\]
    and we can recover the DLP solution from a solution of the OFP, since
    \[x \equiv - \frac{m}{n} \bmod{N}.\]
\end{rem}

More generally, the DLP is one of the most known examples of \emph{Semigroup Action Problem} (SAP), which, given in input a semigroup acting on a set and two elements of that set, consists in finding the semigroup element that map the first set element into the latter. 
In \cite{HPD24}, the authors give some sufficient conditions for the SAP to be reduced to an abelian HSP, such as the injectivity of the action considered over a module or a bound on the cardinality of the hidden subgroup least generating set with respect to that of the underlying ring.

\subsection{Discrete root} 
\label{ssec:root}

The discrete root problem, i.e., the extraction of a root in $\mathbb Z_N$, can be easily solved if the factorization of $N$ is known and therefore this problem can be efficiently solved by a quantum computer.

\begin{defi}
Let $N$ be a positive integer, $a \in \mathbb{Z}_N^\times$ such that $a \equiv x^e \pmod{N}$ for a given $e \in \mathbb{Z}_{\varphi(N)}$, the \textit{Discrete Root Problem} (DRP) is finding $x \in \mathbb{Z}_N^\times$.
\end{defi}

\begin{rem}
If $e$ is an invertible element of $ \mathbb{Z}_{\varphi(N)}$, then the DRP is the classic \textit{RSA problem}, i.e., retrieve the plaintext $M$ knowing the public key $(N, e)$ and the ciphertext $C \equiv M^e \pmod N$.
\end{rem}

Clearly, one way to reduce the DRP to the HSP is through the IFP, since knowing a factorization of $N$ makes finding the root $x$ feasible.
However, to the best of our knowledge, currently there is no direct reduction of the discrete root problem to an instance of the HSP, unlike the problems analyzed in the previous sections.
In the following, we propose some approaches to reduce this problem directly to an HSP instance, highlighting the issues that arise with each of the proposed attempts.

\paragraph{DLP-like approach.}
In an instance of RSA problem, the discrete root $x=a^y$ can be obtained by solving the DLP in $(\mathbb{Z}_{\varphi(N)}, +)$ given by 
\begin{equation*}
e y \equiv 1 \bmod{\varphi(N)}.
\end{equation*}
With this formulation, there is an instance of HSP with $G = (\mathbb{Z}_{\varphi(N)} \times \mathbb{Z}_{\varphi(N)}, +)$ and $f\colon G \longrightarrow \mathbb{Z}_{\varphi(N)}$ defined as
\[
f(g_1, g_2) := g_1 e - g_2 = e (g_1 - g_2 y).
\]
The hidden subgroup is obtained considering that  
\begin{equation*}
f(g') = f(g) 
\quad \text{if and only if}\quad g'_1 - g_1 \equiv (g'_2 - g_2) y \bmod \varphi(N),
\end{equation*}
for $g=(g_1,g_2),g'=(g_1',g_2')\in G$.
For a suitable $k\in\mathbb Z_{\varphi(N)}$, this is equivalent to the following system:
\begin{equation*}
\begin{cases}
g_1' - g_1 \equiv ky &\bmod\,\varphi(N), \\
g_2' - g_2 \equiv k &\bmod\,\varphi(N).
\end{cases}
\end{equation*}
Hence,
\begin{equation*}
f(g') = f(g) 
\quad\text{if and only if}\quad g' \in g + \langle (y, 1) \rangle,
\end{equation*}
and $f$ hides the subgroup $H = \langle (y, 1) \rangle$.

Despite this approach appearing promising for the DRP, a problem arises in the implementation of $f$: the reduction modulo $\varphi(N)$ is required, but $\varphi(N)$ is unknown since we do not have access to the factorization of $N$.

\paragraph{Mixed approach.}
Another idea is to work with the known modulo $N$, for instance, by taking the group $G = (\mathbb{Z}_N^\times \times \mathbb{Z}_{\varphi(N)}, *)$, where
\begin{equation*}
(g_1, g_2) * (g'_1, g'_2) = (g_1 g'_1, g_2 + g'_2),\qquad \text{for }g = (g_1, g_2),\; g' = (g'_1, g'_2) \in G,
\end{equation*}
and the function $f\colon G \longrightarrow \mathbb{Z}_N^\times$ given by
\[
f(g_1, g_2) := g_1^e a^{- g_2} = g_1^e x^{-e g_2},
\]
so that evaluating the function $f$ does not require knowing $\varphi(N)$.
In particular,
\begin{equation*}
\label{eq:mix}
f(g') = f(g) \quad\text{if and only if}\quad (g'_1 g_1^{-1})^e \equiv x^{e (g'_2 - g_2)}\bmod N.
\end{equation*}
Assuming that there are $r \ge 1$ distinct $e$-th roots $x_1, x_2, \ldots, x_r$, this is equivalent to the system
\begin{equation*}
\begin{cases}
g'_1 g_1^{-1} \equiv x_1^{k_1}\ldots x_r^{k_r} &\bmod\, N, \\
g_2' - g_2 \equiv k_1+\ldots+k_r &\bmod\,\varphi(N),
\end{cases}
\end{equation*}
for suitable $k_1,\ldots,k_r\in\mathbb Z_{\varphi(N)}$.
Hence,
\begin{equation*}
f(g')= f(g)\quad \text{if and only if}\quad g' \in g * H,
\end{equation*}
where 
$$
H = \langle (x_1, 1), \ldots, (x_r, 1) \rangle = \{(x_1^{k_1} \ldots x_r^{k_r}, k_1 + \ldots + k_r) \,|\, k_1,\ldots,k_r \in \mathbb{Z}_{\varphi(N)} \}.
$$

From the theoretical point of view, this approach fails when there are no $e$-th roots modulo $N$.
Moreover, the implementation of the QFT is efficient only when dealing with additive groups.
Thus, in order to use it to solve the described HSP, explicit isomorphism $\mathbb{Z}_N^\times \cong \prod_{p\vert N}\mathbb{Z}_{p^{k_p}}^\times \cong \prod_{p\vert N}\mathbb{Z}_{\varphi(p^{k_p})}$ and generators of the multiplicative cyclic groups are required, for which we need to know the prime factorization $N = \prod_{p\vert N} p^{k_p}$.

\paragraph{Exponent approach.}
In this approach, we formulate an instance of the HSP whose hidden subgroup hides the value of $\varphi(N)$.
Once this value is obtained, the DRP can be efficiently solved.
Actually, this approach does not directly solve the DRP but it could be interesting because, to the best of our knowledge, a direct reduction of the problem of computing $\varphi(N)$ (without knowing the factorization of $N$) to the HSP is not known.
Here, we use the group $G=(\mathbb{Z}_{\varphi(N)} \times \mathbb{Z}_{\varphi(N)}, +)$ and the function $f\colon \mathbb{Z}_{\varphi(N)} \times \mathbb{Z}_{\varphi(N)} \longrightarrow \mathbb{Z}_N^\times$ defined by 
$$
f(g, h) := a^{g + e h} \bmod{N}.
$$
In this case, denoting by $\text{ord}(a)$ the multiplicative order of $a$ in $\mathbb Z_N$, the hidden subgroup is obtained from
\begin{equation*}
f(g',h') = f(g,h) \quad \text{if and only if} \quad  g' - g \equiv - e (h' - h) \bmod{\text{ord}(a)},
\end{equation*}
that, for suitable $j,k\in\mathbb Z_{\varphi(N)}$, is equivalent to the system
\begin{equation*}
\begin{cases}
g' - g \equiv - k e + j \text{ord}\,a &\bmod\,\varphi(N), \\
h' - h \equiv k &\bmod\,\varphi(N).
\end{cases}
\end{equation*}
Hence,
\begin{equation*}
f(g',h')= f(g,h)\quad \text{if and only if}\quad  (g',h') \in (g,h) + \langle(- e, 1), (\text{ord}\, a, 0)\rangle.
\end{equation*}
This formulation allows to avoid all the problems that arise previously.
In \cref{algo:AHSP}, the group occupies two registers and the state in Step~4 is
\begin{equation*}
\frac{1}{\varphi(N)} \sum_{k, j = 0}^{\varphi(N) - 1} \ket{\bar{g} - k e + j \text{ord}\, a} \ket{\bar{h} + k},
\end{equation*}
so that the application of the two QFTs related to $\mathbb{Z}_{\varphi(N)}$ to the group registers gives
\begin{equation*}
\frac{1}{\varphi(N)^2} \sum_{k, j = 0}^{\varphi(N) - 1} 
\left( \sum_{g = 0}^{\varphi(N) - 1} e^{\frac{2 \pi \im}{\varphi(N)} (\bar{g} - k e + j\text{ord}\, a) g} \ket{g} \right)\left( \sum_{h = 0}^{\varphi(N) - 1} e^{\frac{2 \pi \im}{\varphi(N)} (\bar{h} + k) h} \ket{h} \right),
\end{equation*}
which can be simplified to
\begin{equation*}\frac{1}{\varphi(N)^2} 
\sum_{g, h = 0}^{\varphi(N) - 1} 
e^{\frac{2 \pi \im}{\varphi(N)} (\bar{g} g + \bar{h} h)}\sum_{k, j = 0}^{\varphi(N) - 1} e^{\frac{2 \pi \im}{\varphi(N)} (- keg + j\text{ord}(a)\,g + kh)} \ket{g} \ket{h}.
\end{equation*}
Here, the probability of measuring an element is higher if the exponential in the inner sum (over $k, j$) is equal to 1,
that is true if and only if
\begin{equation*}
\begin{cases}
g \equiv 0 &\bmod\,\frac{\varphi(N)}{\text{ord}\, a}, \\
h \equiv e g& \bmod\,\varphi(N).
\end{cases}
\end{equation*}
Thus, the labels from $H^\perp$ returned by the Fourier sampling satisfy this system of equations and can be used to find a multiple of $\varphi(N)$: given two samples $(g_1,h_1)$ and $(g_2,h_2)$, then 
\[
h_i \equiv e g_i \bmod\,\varphi(N),
\]
so that 
\[
\text{lcm}(g_1, g_2)\frac{g_1}{h_1} \equiv \text{lcm}(g_s, g_t)\frac{g_s}{h_s} \bmod \varphi(N),
\]
and consequently
\[
\left(\frac{h_2}{g_2} - \frac{h_1}{g_1}\right)\text{lcm}(g_2, g_1) \equiv 0 \bmod \varphi(N).
\]
Using this multiple of $\varphi(N)$, thanks to the results from \cite{Mil76} (which requires the extended Riemann hypothesis), the factorization of $N$ is obtainable in polynomial time.

\begin{rem}
Following \cref{rem:qft}, in the implementation of this approach, the group registers used as input for the QFT have unknown size $\varphi(N)$.
However, adopting an upper bound for $\varphi(N)$ (e.g., $N-\sqrt{N}$) should be enough to solve this limitation.
\end{rem}

\section{Non-abelian HSP}
\label{sec:nabel}

The HSP on non-abelian groups has collected lots of attentions because it presents many relations with theoretical and applicative topics. 
For instance, as we detail in this section, the non-abelian HSP can be linked to classical mathematical problems or to new security principles behind post-quantum cryptosystems, but also to the Grover's algorithm that allows to reduce the security of symmetric cryptography using a quantum computer \cite{LK07,KL07}.

However, the previous considerations with abelian groups do not hold when dealing with the finite non-abelian cases.
In particular, since the representations $\rho$ have dimension $d_\rho > 1$ and, hence, do not coincide with the corresponding characters $\chi_\rho$, the QFTs corresponding to non-abelian groups are more complicated than in the abelian case and a quantum algorithm to efficiently solve the non-abelian HSP is generally not known.

From a theoretical point of view, since $G$ is still finite, the number of its irreducible pairwise
non-isomorphic representations is finite, see \cite[Theorem 7, Section 2.4]{Ser96}.
More precisely, let $\hat{G}$ be the set of such representations, let $V_\rho$ be the vector space, and let $d_\rho$ denote the dimension of each $\rho \in \hat{G}$, then
$$
\frac{1}{|G|} \sum_{\rho \in \hat{G}} d_\rho \chi_\rho(g) = \begin{cases}
1 &\text{if }g=1, \\
0 &\text{if }g\ne 1,
\end{cases}
$$
and, since $\chi_\rho(1) = \text{Tr}(I_{d_\rho}) = d_\rho$, this implies that $|G| = \sum_{\rho \in \hat{G}} d_\rho^2$.

\begin{prop}
For each $\rho \in \hat{G}$ there is a basis $(v_1, \ldots, v_{d_\rho})$ of $V_\rho$ such that the matrix representation $\rho(g) \in \mathbb{C}^{d_\rho \times d_\rho}$ is unitary for each $g \in G$.
\end{prop}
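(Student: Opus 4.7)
The plan is to use the classical averaging (Weyl unitary) trick: start from an arbitrary Hermitian inner product on $V_\rho$ and average it over the (finite) group $G$ to obtain a $G$-invariant inner product, with respect to which $\rho(g)$ becomes an isometry for every $g\in G$. An orthonormal basis for this averaged inner product is then the sought-after basis.

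More precisely, first I would fix any Hermitian inner product $\langle\,\cdot\,,\,\cdot\,\rangle_0$ on the finite-dimensional complex vector space $V_\rho$ (such an inner product exists by choosing any basis and declaring it orthonormal). Then I would define
\begin{equation*}
\langle u,v\rangle \,:=\, \frac{1}{|G|}\sum_{g\in G}\langle \rho(g)u,\rho(g)v\rangle_0,\qquad u,v\in V_\rho,
\end{equation*}
which makes sense because $G$ is finite. A routine check shows that $\langle\,\cdot\,,\,\cdot\,\rangle$ is sesquilinear and Hermitian (since each summand is), and positive definite: for $u\neq 0$ every $\rho(g)u$ is nonzero (as $\rho(g)$ is invertible) and hence $\langle \rho(g)u,\rho(g)u\rangle_0>0$, so the average is strictly positive.

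The crucial property is $G$-invariance. For every $h\in G$ and $u,v\in V_\rho$, using that $\rho$ is a homomorphism,
\begin{equation*}
\langle \rho(h)u,\rho(h)v\rangle \,=\, \frac{1}{|G|}\sum_{g\in G}\langle \rho(gh)u,\rho(gh)v\rangle_0 \,=\, \frac{1}{|G|}\sum_{g'\in G}\langle \rho(g')u,\rho(g')v\rangle_0 \,=\, \langle u,v\rangle,
\end{equation*}
where in the middle equality I performed the change of summation variable $g'=gh$, which is a bijection of $G$ onto itself.

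Finally, applying Gram--Schmidt to any basis of $V_\rho$ with respect to $\langle\,\cdot\,,\,\cdot\,\rangle$ produces an orthonormal basis $(v_1,\ldots,v_{d_\rho})$; in this basis the linear map $\rho(g)$ is represented by a matrix $M_g\in\mathbb{C}^{d_\rho\times d_\rho}$ that preserves the standard Hermitian form of $\mathbb{C}^{d_\rho}$, hence $M_g^{*}M_g=I_{d_\rho}$, i.e., $\rho(g)$ is unitary for every $g\in G$. There is essentially no real obstacle here: the only step deserving attention is the change of variables in the averaging sum, which is precisely where the finiteness of $G$ is used in an essential way. For infinite (compact) groups the same idea works by replacing the normalized sum with an integral against Haar measure, but this falls outside the finite setting relevant to the paper.
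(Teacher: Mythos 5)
Your proof is correct: the averaging (Weyl unitarian) trick, the change of variable $g' = gh$ establishing $G$-invariance, and the passage to an orthonormal basis are all carried out properly. The paper itself states this proposition without proof, treating it as a standard fact from the representation theory of finite groups (cf.\ its reference to Serre), and the classical argument behind that fact is exactly the one you give, so there is nothing to add.
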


This result means that, after a suitable basis is found, the QFT in the non-abelian case is still implementable as an efficient quantum circuit, despite a new definition is required.
In the following, $\rho(g)$ is the matrix representation in the basis given in the previous proposition.

\begin{defi}
The \emph{General Fourier Transform} (GFT) over $G$ is defined as
$$
\mathcal F_G\colon \ket{g} \longmapsto \frac{1}{\sqrt{|G|}} \sum_{\rho \in \hat{G}} \sqrt{d_\rho} \sum_{i, j = 1}^{d_\rho} \rho(g)_{i, j} \ket{\rho} \ket{i} \ket{j}.
$$
\end{defi}

\begin{prop}
The GFT $\mathcal F_G$ is a unitary transform.
\end{prop}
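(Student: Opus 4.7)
The plan is to verify unitarity via the standard route: check that $\mathcal F_G$ is an isometry on the computational basis $\{\ket{g}\}_{g \in G}$ and then use the dimension identity $|G| = \sum_{\rho \in \hat G} d_\rho^2$ recalled just before the definition to conclude that an isometry between equal-dimensional finite spaces is automatically unitary. The whole argument therefore reduces to showing
\[
\langle \mathcal F_G g \,|\, \mathcal F_G g' \rangle = \delta_{g, g'}\quad\text{for every }g, g' \in G.
\]

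The central computation is short. Expanding the definition and using orthonormality of the target basis vectors $\ket{\rho}\ket{i}\ket{j}$ collapses the double sum onto its diagonal, giving
\begin{align*}
\langle \mathcal F_G g \,|\, \mathcal F_G g' \rangle &= \frac{1}{|G|} \sum_{\rho \in \hat G} d_\rho \sum_{i, j = 1}^{d_\rho} \overline{\rho(g)_{i, j}} \, \rho(g')_{i, j}.
\end{align*}
By the preceding proposition each matrix $\rho(g)$ is unitary, so $\overline{\rho(g)_{i, j}} = \rho(g^{-1})_{j, i}$, and the inner double sum equals $\mathrm{Tr}\bigl(\rho(g^{-1}) \rho(g')\bigr) = \chi_\rho(g^{-1} g')$ thanks to the homomorphism property of $\rho$. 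Therefore
\[
\langle \mathcal F_G g \,|\, \mathcal F_G g' \rangle = \frac{1}{|G|} \sum_{\rho \in \hat G} d_\rho \, \chi_\rho(g^{-1} g'),
\]
which is exactly the character identity displayed immediately before the definition of $\mathcal F_G$ applied at the element $g^{-1} g'$, and hence evaluates to $1$ when $g = g'$ and to $0$ otherwise.

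The step most deserving care is not the algebra but the appeal to that character identity: it is the only non-routine ingredient and it is stated, rather than proved, in the excerpt. In a fully self-contained write-up one would derive it from \cref{lem:schur} by applying Schur's lemma to the averaged intertwiner $\frac{1}{|G|} \sum_{g \in G} \rho(g) A \sigma(g)^{-1}$ for an arbitrary linear map $A$, obtaining first the Schur orthogonality relations for matrix coefficients and then, as a specialization, the character identity used above. Given that the paper cites this fact as standard, the short calculation sketched above is all that is needed to close the argument.
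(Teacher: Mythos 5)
Your proof is correct: the paper states this proposition without proof, and your argument uses exactly the ingredients the paper sets up immediately beforehand (the unitarity of the matrices $\rho(g)$ in the chosen basis, the character identity $\frac{1}{|G|}\sum_{\rho\in\hat G} d_\rho\chi_\rho(g)=\delta_{g,1}$, and the dimension count $|G|=\sum_{\rho\in\hat G} d_\rho^2$ guaranteeing that an isometry onto a space of the same dimension is unitary), so it is the standard argument the paper implicitly relies on. The computation $\langle \mathcal F_G g\,|\,\mathcal F_G g'\rangle=\frac{1}{|G|}\sum_{\rho\in\hat G} d_\rho\,\chi_\rho(g^{-1}g')=\delta_{g,g'}$ is carried out correctly, and your remark that the quoted character identity is the only non-routine input (derivable from \cref{lem:schur} via the orthogonality relations) is accurate.
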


In the Fourier sampling in \cref{algo:AHSP}, the result of the coset sampling (Steps~1--4) remains unchanged, but the state of the group register after the QFT (Step~5) becomes
$$
\frac{1}{\sqrt{|G|}} \sum_{\rho \in \hat{G}} \sqrt{d_\rho} \sum_{i,j = 1}^{d_\rho} \left( \frac{1}{\sqrt{|H|}} \sum_{h \in H} \rho(\bar{g} h) \right)_{i,j} \ket{\rho} \ket{i} \ket{j}.
$$

\begin{defi}
It is possible to distinguish two forms of Fourier sampling:
\begin{itemize}
\item the \emph{weak Fourier sampling} returns a representation $\rho \in \hat{G}$ with probability $\text{Pr}[\rho] = \sum_{i, j = 1}^{d_\rho} \text{Pr}[\rho, i, j]$;
\item the \emph{strong Fourier sampling} returns a representation $\rho \in \hat{G}$ and the indices $i,j\in\{1,\ldots,d_\rho\}$ with probability $\text{Pr}[\rho, i, j]$.
\end{itemize}
\end{defi}

When $G$ is abelian, then $d_\rho = 1$ and the two definitions coincide.
More precisely, the samples are random elements $g_1, \ldots, g_s \in G$ such that
$$
\chi_{g_i} \in H^\perp
= \{\chi_g \,|\, \chi_g(H) = \{1\}\}
= \{\chi_g \,|\, H \subseteq \ker(\rho_g)\},
$$
and the systems of equations $\chi_{g_i}(h) = \chi_h(g_i) = 1$ can be solved to find $h \in H$.
In other words, the output of the algorithm is a random element in $H = \bigcap_{i = 1}^{s} \ker(\rho_{g_i})$.

In the following sections, we review the state of the art focusing on specific non-abelian groups, some related to problems of interest in cryptographic applications, others useful for understanding the current state of the art and the techniques developed for tackling the HSP in the non-abelian setting.
We also provide some insight on general solving approaches and an alternative to Fourier sampling.

\subsection{Dedekind and Hamiltonian groups}\label{ssec:ded}

These special classes of non-abelian groups are not of cryptographic relevance, but they are useful to identify successful approaches.

\begin{defi}
A group is called \emph{Dedekind} if all its subgroups are normal.
\end{defi}

\begin{theo} \cite{HRT00}
If $\rho_1, \ldots, \rho_s$ are $s = c \log |G|$ samples of the weak Fourier sampling for a normal hidden subgroup $H$, for $c>0$, then
$$
\text{Pr}\left[ H = \bigcap_{i = 1}^{s} \ker(\rho_i) \right] \geq 1 - \ex{- \frac{(c - 2)^2}{2 c} \log |G|}.
$$
\end{theo}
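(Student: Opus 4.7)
The plan is to exploit the normality of $H$ to reduce weak Fourier sampling to independent Plancherel draws on $\widehat{G/H}$, then combine a sharp per-coset kernel estimate with a Hoeffding-type tail bound on the progress of the running intersection. First, when $H$ is normal, any irreducible $\rho$ of $G$ with $H\not\subseteq\ker\rho$ satisfies $\sum_{h\in H}\rho(h)=0$: the $H$-invariant subspace of $V_\rho$ is $G$-invariant by normality of $H$, hence $\{0\}$ by irreducibility, so the projector $\frac{1}{|H|}\sum_{h\in H}\rho(h)$ onto this subspace vanishes. Applying the GFT to the coset state $\frac{1}{\sqrt{|H|}}\sum_{h\in H}\ket{\bar g h}$ and collecting squared amplitudes, I would verify that the distribution of the returned $\rho$ is exactly the Plancherel distribution on $\widehat{G/H}$, i.e.\ $\text{Pr}[\rho]=d_\rho^2/|G/H|$, supported on representations trivial on $H$. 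In particular $H\subseteq\bigcap_{i=1}^s\ker\rho_i$ always, so only strict containment has to be ruled out.

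For a fixed $g\in G\setminus H$ with nontrivial image $\bar g$ in $G/H$, I would apply the identity $\sum_{\bar\rho\in\widehat{K}}d_{\bar\rho}^2=|K|$ to the quotient $K=(G/H)/N$, where $N$ is the normal closure of $\bar g$ in $G/H$, to obtain
\[
\text{Pr}[g\in\ker\rho]\;=\;\sum_{\bar\rho\in\widehat{G/H}:\,\bar\rho(\bar g)=I}\frac{d_{\bar\rho}^2}{|G/H|}\;=\;\frac{|(G/H)/N|}{|G/H|}\;=\;\frac{1}{|N|}\;\le\;\frac12,
\]
since $\bar g\ne 1$ forces $|N|\ge 2$. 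By independence across samples, $\text{Pr}[g\in\bigcap_{i=1}^s\ker\rho_i]\le 2^{-s}$; a direct union bound over $g\in G\setminus H$ already produces the crude estimate $\text{Pr}[\bigcap_i\ker\rho_i\ne H]\le|G|\cdot 2^{-s}$.

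To extract the sharper Chernoff-type exponent $(c-2)^2\log|G|/(2c)$ stated in the theorem, I would instead track the decreasing chain $H_i:=\bigcap_{j\le i}\ker\rho_j$ together with the log-progress $W_i:=\log_2(|H_{i-1}|/|H_i|)$: failure corresponds to $\sum_iW_i<\log_2(|G|/|H|)$, and for every step with $H_{i-1}\ne H$ the per-element estimate above, applied to any fixed $g^*\in H_{i-1}\setminus H$, gives the uniform conditional bound $\text{Pr}[W_i\ge 1\mid H_{i-1}]\ge 1/2$. Coupling the indicators $\mathbf{1}[W_i\ge 1]$ with i.i.d.\ $\mathrm{Bernoulli}(1/2)$ variables $Y_1,\dots,Y_s$ via a shared uniform sequence forces $\sum_iY_i\le\sum_iW_i$ on the failure event, and Hoeffding's inequality for $\mathrm{Bin}(s,1/2)$ yields
\[
\text{Pr}\bigl[\mathrm{Bin}(s,1/2)<\log_2|G|\bigr]\;\le\;\exp\!\left(-\frac{2\bigl(s/2-\log_2|G|\bigr)^2}{s}\right),
\]
which with $s=c\log|G|$ simplifies to the claimed $\exp(-(c-2)^2\log|G|/(2c))$.

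The delicate point, and the main obstacle, is the coupling step: the indicators $\mathbf{1}[W_i\ge 1]$ are not independent because their conditional law depends on the full history through the subgroup structure of $H_{i-1}/H\le G/H$. Realising the stochastic domination by an i.i.d.\ $\mathrm{Bernoulli}(1/2)$ sequence hinges on the fact that, given any history with $H_{i-1}\ne H$, the next sample $\rho_i$ is still Plancherel (independent of the past) and the per-element $1/2$ bound holds for a freely chosen witness $g^*$; once this is made precise, the Hoeffding bound with the correct threshold produces exactly the stated constant $(c-2)^2/(2c)$.
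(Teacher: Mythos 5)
Your proposal is correct and is essentially the original Hallgren--Russell--Ta-Shma argument, which the survey cites without reproducing: normality makes weak Fourier sampling i.i.d.\ Plancherel on $\widehat{G/H}$, each sample cuts the running intersection (always containing $H$) at least in half with conditional probability at least $1/2$ while it still strictly contains $H$, and a Hoeffding/Chernoff bound on the at most $\log_2 |G|$ required halvings gives exactly the stated exponent, with your witness-element/normal-closure computation a valid substitute for the usual $|H|/|N|\le 1/2$ bound and your coupling with i.i.d.\ Bernoulli$(1/2)$ variables the standard way to make the domination rigorous. The only caveat lies in the statement as quoted rather than in your proof: the Hoeffding step needs $s/2\ge\log_2|G|$, i.e.\ $c\ge 2$, so the bound is meaningful (and provable this way) only in that range, not for every $c>0$.
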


Thus, considering the intersection of the measured irreducible representations still gives information about $H$ if it is normal.
In particular, this result implies that the weak Fourier sampling can solve every instance of the HSP over Dedekind groups.

\begin{defi}
Non-abelian Dedekind groups are called \textit{Hamiltonian} groups and have the form $G = \mathbb{Z}_2^k \times A \times Q_8$ where $k$ is a non-negative integer, $A$ is an abelian group with all elements of odd order and $Q_8 = \{\pm 1, \pm \im, \pm \mathrm{j}, \pm \mathrm{k}\}$ is the quaternion group (whose only non-trivial subgroups are $\langle -1 \rangle$, $\langle \im \rangle$, $\langle \mathrm{j} \rangle$, and $\langle \mathrm{k} \rangle$).
\end{defi}

It is noteworthy that, by executing the weak Fourier sampling on multiple subgroups, the solution of the HSP can be generalized from the Dedekind case to the Hamiltonian case, see \cite{HRT03}.
In particular, it is shown in \cite{GSV01} that, for every subgroup $H$, this algorithm allows to find with high probability the largest subgroup of $H$ which is normal in $G$.
In general, a solving algorithm exists when the intersection of normalizers of $G$ is large enough, see \cite{Gav04}.

\subsection{Dihedral groups and the Shortest Vector Problem}
\label{ssec:dihedral}

In this section, we examine the HSP for the dihedral group, which is connected to the Shortest Vector Problem (SVP), on which the security of lattice-based cryptography is based.

\begin{defi}
The \emph{dihedral HSP} (DHSP) is the hidden subgroup problem for the dihedral group $D_N\cong \mathbb Z_N\rtimes_\varphi \mathbb Z_2$.
\end{defi}

As a first step, it is useful to understand what are the possible hidden subgroups of $D_N$.

\begin{prop}\label{prop:subsDN}
The subgroups of $D_N$ are:
\begin{itemize}
\item $H_d = d \mathbb{Z}_N \times \{0\} = \{(k d, 0) \,|\, 0 \leq k  < N / d\}$ for a divisor $d$ of $N$;
\item $H_{d, r}  = H_d \cup \Big( (r, 1)+ H_d\Big) = \{(k d, 0), (r + k d, 1) \,|\, 0 \leq k < N / d\} \cong D_{N/d}$ for a divisor $d$ of $N$ and $0 \leq r < d$.
\end{itemize}
\end{prop}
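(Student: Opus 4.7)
The plan is to exploit that $R := \mathbb{Z}_N \times \{0\}$ is a normal subgroup of $D_N \cong \mathbb{Z}_N \rtimes_\varphi \mathbb{Z}_2$ of index $2$. Given any $H \leq D_N$, I would first set $H_0 := H \cap R$ and classify it: since $R \cong \mathbb{Z}_N$ is cyclic, every subgroup of $R$ has the form $d\mathbb{Z}_N \times \{0\}$ for a unique divisor $d$ of $N$, so $H_0 = H_d$. Because $[H : H_0]$ divides $[D_N : R] = 2$, either $H = H_0$ (when $H$ contains no \emph{reflection}, meaning no element with second coordinate $1$) or $[H : H_0] = 2$.

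In the first case one gets $H = H_d$ immediately, producing the first family. For the second case, I would fix any reflection $(r_0, 1) \in H$ and show that the set of reflections in $H$ is precisely the $H_0$-coset
\[
(r_0, 1) \cdot H_0 = \{(r_0 + kd, 1) : 0 \leq k < N/d\},
\]
where the equality uses the semidirect product law together with the fact that $d\mathbb{Z}_N$ is closed under negation, so $\{-kd\}$ and $\{kd\}$ describe the same subset of $\mathbb{Z}_N$. For the reverse inclusion, any two reflections $(r_1, 1), (r_2, 1) \in H$ satisfy $(r_1, 1) \cdot (r_2, 1) = (r_1 - r_2, 0) \in H_0$, forcing $r_1 \equiv r_2 \pmod d$. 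Reducing $r_0$ modulo $d$ is harmless since its multiples are already contained in $H_0$, and this yields a unique $r \in \{0, \ldots, d-1\}$ with $H = H_{d,r}$.

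It remains to verify the converse, namely that each $H_d$ and $H_{d,r}$ is actually a subgroup. For $H_d$ this is immediate. For $H_{d,r}$, I would perform a short closure check based on the group law in $D_N$, together with the observation that every $(r + kd, 1)$ is its own inverse. No step constitutes a serious obstacle; the one subtle point is confirming that the reflections of $H$, when non-empty, form a full coset of $H_0$ of size $|H_0|$ rather than a proper subset, which holds because left multiplication by $H_0$ permutes the reflections of $H$ freely.
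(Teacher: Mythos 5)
Your argument is correct, and it is the standard one: intersect $H$ with the index-$2$ normal rotation subgroup $R=\mathbb Z_N\times\{0\}$, use cyclicity of $R$ to get $H\cap R=H_d$, observe via $H/(H\cap R)\cong HR/R\leq D_N/R\cong\mathbb Z_2$ that $H$ is either $H_d$ or contains reflections forming a single $H_d$-coset, and check closure for the converse. The paper states this proposition without proof, so there is nothing to compare against; your write-up would serve as a proof of it. The only point you leave untouched is the final claim of the statement, namely $H_{d,r}\cong D_{N/d}$: your closure check shows $H_{d,r}$ is a subgroup of order $2N/d$, but to get the isomorphism you should add one line, e.g.\ that $(d,0)$ has order $N/d$, that $(r,1)$ is an involution with
\[
(r,1)\cdot(d,0)\cdot(r,1)=(-d,0)=(d,0)^{-1},
\]
so $H_{d,r}=\langle (d,0),(r,1)\rangle$ satisfies the dihedral presentation of $D_{N/d}$ (equivalently, it maps isomorphically onto $D_{N/d}\cong\mathbb Z_{N/d}\rtimes\mathbb Z_2$ by sending $(kd,b)$ and $(r+kd,b)$ to $(k,b)$ appropriately). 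With that one addition the proof is complete; your closing remark about the reflections forming a full coset is in fact already guaranteed by the two inclusions you establish, so no extra freeness argument is needed.
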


This is even more interesting considering the following result.

\begin{prop}\cite{EH98}
The DHSP reduces to efficiently find $r$ for the hidden subgroup $H_{N, r} = \langle (r, 1) \rangle = \{(0, 0), (r, 1)\}$.
\end{prop}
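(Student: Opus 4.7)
The plan is to exploit the classification of subgroups in \cref{prop:subsDN}, according to which every hidden subgroup of $D_N$ is either purely cyclic of the form $H_d$, or of the form $H_{d,r} \supseteq H_d$ for some divisor $d$ of $N$ and $0 \leq r < d$. In both cases the divisor $d$ can be recovered by an abelian HSP call, and in the second case the remaining information is exactly an $r$-style parameter, matching the hypothesis of the proposition.

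First, I would restrict $f$ to the abelian subgroup $\mathbb{Z}_N \times \{0\}$ of $D_N$. Since one has $H \cap (\mathbb{Z}_N \times \{0\}) = H_d$ in both subcases of \cref{prop:subsDN}, Kitaev's algorithm (\cref{th:kitaev}) applied to this restriction returns the generator $d$ in polynomial time.

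Next, I would use that $H_d$ is normal in $D_N$ with quotient $D_N / H_d \cong D_d$. Choosing the canonical section $s \colon D_d \longrightarrow D_N$ that sends $(x,b)$ with $0 \leq x < d$ to $(x,b)$, I would define $g := f \circ s$. Because $H_d \subseteq H$, $g$ factors through the quotient and hides the image $\pi_d(H) \leq D_d$: if $H = H_d$ then this image is trivial and $g$ is injective, while if $H = H_{d,r}$ the image is $\{(0,0),(r,1)\}$, i.e., precisely a hidden subgroup of the promised form inside $D_d$. I would then invoke the assumed oracle on $g$ (with cyclic part of size $d$) to obtain a candidate $r$, and verify it classically by checking whether $f(0,0) = f(r,1)$ in $D_N$.

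Equality confirms $H = H_{d,r}$, whereas inequality forces $H = H_d$. The whole reduction uses one call to Kitaev's algorithm, one call to the promised oracle, and a constant number of evaluations of $f$, hence it is polynomial. The main obstacle is the degenerate case $H = H_d$, in which the oracle's promise is violated and its output cannot be trusted a priori; the classical verification above sidesteps this, but a cleaner treatment would preface the oracle call with a brief probabilistic test (e.g., sampling a few reflections $(x,1)$ and checking whether any $f$-value coincides with some $f(x',0)$) to decide in advance whether any reflection belongs to $H$.
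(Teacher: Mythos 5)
Your proposal is correct and follows essentially the same route as the paper: it is exactly the subgroup/quotient reduction the paper spells out for $D_N$ in \cref{ssec:general} (restrict $f$ to $H_1\cong\mathbb Z_N$ and use the abelian HSP to find $d$ with $H\cap H_1=H_d$, then quotient by the normal subgroup $H_d$ to get $D_N/H_d\cong D_d$ with hidden subgroup $\langle(r,1)\rangle$), which is also the standard argument behind the cited result of \cite{EH98}. Your explicit handling of the degenerate case $H=H_d$ via a classical check of whether $f(0,0)=f(r,1)$ is a sensible completion of the promise-violation issue and does not change the substance of the reduction.
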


The same paper also proposed an algorithm for solving the DHSP in this case.
Firstly, the values $r = 0$ and $r = N / 2$ are checked: since the hidden subgroup $H_{N, r}$ has only two elements, it is sufficient to compare $f(0, 0)$ with $f(0, 1)$ and $f(N / 2, 1)$.
If none of these comparisons is satisfied, the following routine, which is practically the one for the abelian cases, is run:
\begin{algo}\label{algo:DHSP} \hfill\vspace{-0.3cm}
\begin{enumerate}
\item Given the hiding function $f\colon D_N \longrightarrow S$, with $D_N\cong \mathbb Z_N\rtimes_\varphi \mathbb Z_2$, initialize in the state $\ket{0}$ two group registers representing $\mathbb{Z}_N$ and $\mathbb{Z}_2$, and one (or more) auxiliary register representing $S$.
\item Create a uniform superposition on the group registers by applying the operator $\mathcal H^{\mathbb{Z}_N} \otimes \mathcal H \otimes \mathcal I^{S}$.
\item Compute the quantum implementation $\mathcal U_f$ of $f$ with the group registers as input and the auxiliary register containing the output, thus obtaining the state
$$
\frac{1}{\sqrt{2N}} \sum_{a = 0}^{N - 1} \sum_{b = 0}^1 \ket{a} \ket{b} \ket{f(a, b)}.
$$
\item Measure the auxiliary register to a value $f(\bar{a}, \bar{b})$, so that the group registers collapse to
$$
\frac{1}{\sqrt{2}} \big(\ket{\bar{a}} \ket{\bar{b}} + \ket{\bar{a} + (- 1)^{\bar{b}} r} \ket{\bar{b} + 1}\big) 
= \begin{cases}
\frac{1}{\sqrt{2}} \big(\ket{\bar{a}} \ket{0} + \ket{\bar{a} + r} \ket{1}\big), \\
\frac{1}{\sqrt{2}} \big(\ket{\bar{a}} \ket{1} + \ket{\bar{a} - r} \ket{0}\big).
\end{cases}
$$ 
Since $(\bar{a}, \bar{b})$ is chosen uniformly at random, this state is equivalent to taking $\bar{a} \in \mathbb{Z}_N$ uniformly at random and considering
$$
\frac{1}{\sqrt{2}} \big(\ket{\bar{a}} \ket{0} + \ket{\bar{a} + r} \ket{1}\big).
$$
\item Apply the QFT $\mathcal F_N$ to the first group register and obtain
$$
\frac{1}{\sqrt{2N}} \sum_{x = 0}^{N - 1} \omega^{\bar{a} x} \ket{x} \ket{0} + \omega^{(\bar{a} + r) x} \ket{x} \ket{1},
$$
where $\omega = \ex{\frac{2 \pi \im}{N}}$.
\item Measure the first group register, thus obtaining a random value $\bar{x} \in \mathbb{Z}_N$ and collapsing the second group register into the state
$$
\frac{\omega^{\bar{a} \bar{x}} }{\sqrt{2}} \big(\ket{0} + \omega^{r \bar{x}} \ket{1}\big).
$$
\item Apply $\mathcal H$ to the remaining group register, obtaining
\begin{align*}
&\frac{\omega^{\bar{a} \bar{x}}}{2}  \big(\ket{0} + \ket{1} + \omega^{r \bar{x}} (\ket{0} - \ket{1})\big) = \\
&\qquad = \frac{\omega^{\bar{a} \bar{x}}}{2}  \big((1 + \omega^{r \bar{x}}) \ket{0} + (1 - \omega^{r \bar{x}}) \ket{1}\big) \\
&\qquad = \omega^{\bar{a} \bar{x}} \omega^{r \bar{x}/2} \left( \frac{\omega^{-r \bar{x}/2} + \omega^{r \bar{x}/2}}{2} \ket{0} + \frac{\omega^{-r \bar{x}/2} - \omega^{r \bar{x}/2}}{2} \ket{1} \right) \\
&\qquad = \ex{\frac{\pi\im}{N} (2 \bar{a} + r) \bar{x}} \left( \cos\left(\frac{\pi}{N} r \bar{x}\right) \ket{0} - \im \sin\left(\frac{\pi}{N} r \bar{x}\right) \ket{1} \right).
\end{align*}
\item Measure the register to a bit $\bar{y} \in \mathbb{Z}_2$.
Considering that in Step~6 the output was taken randomly, the probability of picking the obtained result is given by 
$$
\text{Pr}[\bar{x}, \bar{y}]
=\text{Pr}[\bar{x}]\text{Pr}[\bar{y}]
= \frac{1}{N}\text{Pr}[\bar{y}]
= \begin{cases}
\frac{1}{N} \cos^2 \left( \frac{\pi r \bar{x}}{N} \right) & \text{if } \bar{y} = 0, \\
\frac{1}{N} \sin^2 \left( \frac{\pi r \bar{x}}{N} \right) & \text{if } \bar{y} = 1.
\end{cases}
$$
\end{enumerate}
\end{algo}
Then it is sufficient to have $O(\log N)$ samples $\bar{x}_1, \bar{x}_2, \ldots$ to find $r$ by searching the minimum of a function $x \longmapsto g(x, \bar{x}_1, \bar{x}_2, \ldots)$ in the domain $\{1, 2, \ldots, \lfloor N / 2 \rfloor\}$.
The whole algorithm allows to solve the DHSP using $O(\log N)$ calls to $f$.
However, to find the minimum of the function $g$, the tests required are $O(N)$.

Since finding $r$ is crucial to solve the DHSP, the state obtained in Step~4 is very important and it is worth focusing on it separately.

\begin{defi}
The \emph{Dihedral Coset Problem} (DCP) consists in finding $r \in \mathbb{Z}_N$ given a black box that outputs states of the form $\frac{1}{\sqrt{2}} \big(\ket{x} \ket{0} + \ket{x + r} \ket{1}\big)$ for $x \in \mathbb{Z}_N$.
\end{defi}

\begin{prop}
The DHSP and the DCP are computationally equivalent.
\end{prop}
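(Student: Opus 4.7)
The plan is to establish both polynomial-time reductions required by the definition of computational equivalence, namely $\text{DHSP}\le_p\text{DCP}$ and $\text{DCP}\le_p\text{DHSP}$.

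For the first reduction, I would invoke the preceding proposition to restrict attention to hidden subgroups of the form $H_{N,r}=\langle(r,1)\rangle$, so that solving the DHSP reduces to recovering $r$. Given the hiding function $f$, Steps 1--4 of Algorithm \ref{algo:DHSP} (the coset-sampling stage) produce, with a single oracle call to $f$ and polynomial-time unitaries, the state
\[
\frac{1}{\sqrt{2}}\bigl(\ket{\bar{a}}\ket{0}+\ket{\bar{a}+r}\ket{1}\bigr),
\]
where $\bar{a}\in\mathbb{Z}_N$ is uniformly random. This is exactly the output distribution of a DCP black-box with parameter $r$. Feeding a polynomial number of such states to the DCP solver recovers $r$, and hence $H_{N,r}$, which by the previous proposition solves the general DHSP.

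For the converse reduction, I would observe that a DCP black-box is itself an implementation of the coset-sampling stage for a hidden subgroup of the form $H_{N,r}$. Any quantum solver for the DHSP accesses the hiding function through this stage, so one can run the solver by substituting each internal coset-sampling invocation with a single query to the DCP black-box. The subgroup returned by the solver is then $H_{N,r}$, which immediately yields $r$, and the total number of DCP queries matches the number of coset-sampling steps used by the DHSP algorithm, which is polynomial in $\log N$.

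The main obstacle lies in the reverse direction, because the DCP does not a priori come with a classical hiding function, so one cannot literally feed it to a DHSP oracle as a black-box classical map. The cleanest way I see to address this is to formalize the DHSP oracle at the quantum level (as a device providing coset states on demand) rather than at the classical-function level; with this formalization, substituting the DCP black-box for the coset-sampling subroutine is transparent. A more combinatorial alternative would be to use the DCP oracle to sample pairs of coset representatives and build a table simulating $f$, but this requires some care to ensure consistency across repeated queries on the same element and to keep the overhead polynomial.
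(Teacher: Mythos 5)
Your proposal is correct and follows essentially the same route as the paper: coset sampling (Steps 1--4 of Algorithm \ref{algo:DHSP}) turns a DHSP instance with hidden subgroup $H_{N,r}$ into exactly the DCP black-box states, and conversely the DCP black-box outputs are substituted for the coset samples and fed into the remaining steps of the DHSP routine, whose output $\{(0,0),(r,1)\}$ yields $r$. The subtlety you flag about the reverse direction is handled in the paper exactly as you suggest, by plugging the DCP states into the concrete post-coset-sampling stage rather than into a classical-function oracle.
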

\begin{proof}
Knowing a solution to the DCP means that we can find $r$ after Step~4 in \cref{algo:DHSP} above.
Conversely, given an algorithm consisting of Steps~4--8 of \cref{algo:DHSP}, with as input the output of the DCP black box that replaces the coset samples, the final output is a dihedral subgroup of the form $\{(0,  0), (r, 1)\}$, from which it is easy to obtain $r$ and thus solve the DCP.
The efficiency of these reduction follows directly from the polynomial-time complexity of the subroutines.
\end{proof}

It is noteworthy that there is a subexponential-time algorithm to solve the DCP, and thus the DHSP, due to Kuperberg, see \cite{Kup05}.

Focusing on the state of the second group register in Step~6 of \cref{algo:DHSP} that, ignoring the phase factor, is given by
$$
\ket{\psi_{\bar{x}}} = \frac{1}{\sqrt{2}} \big(\ket{0} + \omega^{r \bar{x}} \ket{1}\big),
$$
the idea behind the Kuperberg's algorithm can be easily described when $N = 2^n$.
In this case, he proved that, starting from states $\ket{\psi_{\bar{x}}}$ for different states $\bar{x} \in \mathbb{Z}_{2^n}$, it is possible to create the state 
$$
\ket{\psi_{2^{n - 1}}} 
= \frac{1}{\sqrt{2}} \big(\ket{0} + \ex{\pi \im r} \ket{1}\big)
= \frac{1}{\sqrt{2}} \big(\ket{0} + (- 1)^r \ket{1}\big),
$$
with high probability.
Applying the Hadamard gate to this register results in
$$
\mathcal H(\ket{\psi_{2^{n - 1}}}) 
= \frac{1 + (- 1)^r}{2} \ket{0} + \frac{1 - (- 1)^r}{2} \ket{1},
$$
so that, after measuring, if 0 is observed, then $r$ is even; otherwise, it is odd.

In particular, Kuperberg noted that, starting from $f\colon D_N \longrightarrow S$ hiding $H = \langle(r,1)\rangle$:
\begin{itemize}
\item if $r$ is even, then the function $f'\colon D_{N / 2} \longrightarrow S$ given by $f'(a, b) = f(2 a, b)$ hides $H' = \langle (r / 2, 1) \rangle$;
\item if $r$ is odd, then $f''\colon D_{N / 2} \longrightarrow S$ given by $f''(a, b) = f(2 a + 1, b)$ hides $H'' = \langle ((r - 1) / 2, 1) \rangle$.
\end{itemize} 
Thus, by iterating the process, all bits of $r$ can be found in $\exp(O(\sqrt{n}))$ steps with $\exp(O(\sqrt{n}))$ required quantum space.

The work of Kuperberg was the first of many improvements to the solution of DCP/DHSP.
Right after Kuperberg's preprint, Regev, in the unpublished paper \cite{Reg04}, reduced the quantum space requirement to polynomial while increasing the running time to $\exp\big(O\big(\sqrt{n \log n}\big)\big)$, i.e., subexponential.
Later, Kuperberg found another improvement that requires again $\exp(O(\sqrt{n}))$ steps but only $O(n)$ quantum space if sided by $\exp(O(\sqrt{n}))$ classical space; see \cite{Kup13}.

The resolution of the DHSP by coset sampling, i.e., of the DCP, collected all those attentions because of its direct relation to lattice-based cryptography.
The first step to see this relation is to consider the following generalization of the DCP.
\begin{defi}
The \emph{two-point problem} consists of finding the array $r \in \mathbb{Z}_N^\ell$ given a black box that outputs states of the form $\frac{1}{\sqrt{2}} (\ket{x} \ket{0} + \ket{x + r} \ket{1})$ for $x \in \mathbb{Z}_N^\ell$.
\end{defi}

Clearly, the DCP is a two-point problem with arrays of length $\ell = 1$, but Regev proved that it is possible to reduce the two-point problem to the DCP, thus obtaining an equivalence.

\begin{prop} \cite{Reg04}
The two-point problem reduces to efficiently solve the DCP.
\end{prop}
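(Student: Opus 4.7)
The plan is to reduce the $\ell$-dimensional two-point problem to the one-dimensional DCP by induction on $\ell$: at each step I extract one coordinate of $r=(r_1,\ldots,r_\ell)\in\mathbb{Z}_N^\ell$ using the DCP oracle, then peel that coordinate off the coset state to obtain a two-point problem in $\mathbb{Z}_N^{\ell-1}$. The base case $\ell=1$ is exactly the DCP, so the substantive work is the reduction step.

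For the extraction, I would take a fresh sample
\[
\frac{1}{\sqrt{2}}\bigl(\ket{x_1,\ldots,x_\ell}\ket{0}+\ket{x_1+r_1,\ldots,x_\ell+r_\ell}\ket{1}\bigr)
\]
from the two-point oracle, apply $\mathcal F_N$ coordinatewise to components $2,\ldots,\ell$, and measure them. A computation analogous to Step~5 of \cref{algo:DHSP} shows that the remaining register collapses to the phased coset state
\[
\frac{1}{\sqrt{2}}\bigl(\ket{x_1}\ket{0}+\omega^{r_2\bar{y}_2+\cdots+r_\ell\bar{y}_\ell}\ket{x_1+r_1}\ket{1}\bigr),
\]
where $\omega=\ex{\frac{2\pi\im}{N}}$ and $\bar{y}_2,\ldots,\bar{y}_\ell\in\mathbb{Z}_N$ are the measurement outcomes. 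Up to a phase depending on the still-unknown components $r_2,\ldots,r_\ell$, this is a DCP sample hiding $r_1$; feeding polynomially many such samples to the DCP oracle recovers $r_1$.

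Once $r_1$ is known, I peel off the first coordinate from any fresh two-point sample by applying the unitary
\[
\ket{x_1,\ldots,x_\ell}\ket{b}\longmapsto\ket{x_1-br_1,x_2,\ldots,x_\ell}\ket{b},
\]
which forces the first coordinate to equal $x_1$ in both branches and therefore disentangles it from $\ket{b}$. Measuring that register then produces
\[
\frac{1}{\sqrt{2}}\bigl(\ket{x_2,\ldots,x_\ell}\ket{0}+\ket{x_2+r_2,\ldots,x_\ell+r_\ell}\ket{1}\bigr),
\]
which is a two-point sample in $\mathbb{Z}_N^{\ell-1}$ hiding $(r_2,\ldots,r_\ell)$. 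Iterating $\ell-1$ times recovers all remaining components, with polynomial overhead in $\ell$ and $\log N$.

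The hard part will be the unknown phase $\omega^{r_2\bar{y}_2+\cdots+r_\ell\bar{y}_\ell}$ produced during the extraction step, because it involves the very components of $r$ not yet determined and hence cannot be cancelled in advance by a diagonal correction on the $\ket{b}$ register. A rigorous argument must therefore show that the DCP oracle remains correct on such phase-decorated coset states. This is true for Kuperberg- and Regev-style algorithms, which operate in the Fourier picture $\ket{0}+\omega^{rk}\ket{1}$ where an extra additive phase on $\ket{1}$ effectively shifts the label $k$ and is absorbed into the (already random) distribution of labels used in the coset-combination steps. Making this precise constitutes the technical heart of the reduction and is the content of Regev's unpublished \cite{Reg04}.
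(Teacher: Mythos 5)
Your inductive ``extract one coordinate, then peel it off'' strategy has a genuine gap exactly at the point you flag: the extraction step does not produce DCP samples. After Fourier-transforming and measuring coordinates $2,\ldots,\ell$ you obtain
\[
\frac{1}{\sqrt{2}}\bigl(\ket{x_1}\ket{0}+\omega^{r_2\bar{y}_2+\cdots+r_\ell\bar{y}_\ell}\ket{x_1+r_1}\ket{1}\bigr),
\]
and the relative phase $\omega^{r_2\bar{y}_2+\cdots+r_\ell\bar{y}_\ell}$ is an \emph{unknown} quantity that changes from sample to sample (the $\bar{y}_i$ are fresh random outcomes each time) and depends on the still-hidden $r_2,\ldots,r_\ell$. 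Your claim that such a phase ``effectively shifts the label $k$ and is absorbed'' is not correct: after the usual QFT-and-measure step one gets $\ket{0}+\omega^{r_1 k + c}\ket{1}$ with known $k$ but unknown, sample-dependent $c$, and there is no reinterpretation of this as a legitimate labelled state $\ket{0}+\omega^{r_1 k'}\ket{1}$ for a known $k'$. The sieve algorithms of Kuperberg and Regev combine states by adding and subtracting \emph{known} labels and finish by a Hadamard measurement on $\ket{0}+(-1)^{r_1}\ket{1}$; with your states the unknown phases accumulate and the final measurement no longer reveals the parity of $r_1$. More fundamentally, the proposition asserts an oracle reduction: you must manufacture a black box emitting states of the exact form $\frac{1}{\sqrt{2}}(\ket{x}\ket{0}+\ket{x+r_1}\ket{1})$ to hand to a DCP solver, and phase-decorated states violate that promise, so correctness of the solver cannot be invoked. (Measuring the other coordinates in the computational basis instead is no way out either, since nonzero $r_i$ would distinguish the two branches and destroy the superposition; this is precisely why single-coordinate extraction is the hard part.)

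For comparison, the paper itself gives no proof and defers to Regev, whose argument avoids the issue entirely: rather than extracting coordinates one at a time, one encodes the whole vector into a single cyclic group element via a positional (base) encoding chosen so that the fixed difference vector $r$ maps to a fixed integer difference; the two-point samples then become genuine DCP samples over one larger cyclic group, whose size is exponential in $\ell$ but whose bit-length, and hence the cost of the subexponential DCP algorithm, remains polynomially related to $\ell\log N$. In the modular formulation of the two-point problem used in this survey one must additionally control wraparound (the integer difference of lifted coordinates is not constant when a component wraps modulo $N$), which is why Regev works with vectors in a box of integers rather than in $\mathbb{Z}_N^\ell$. Your peeling step (the controlled subtraction of $b\,r_1$ followed by measuring the now-disentangled first register) is fine once $r_1$ is known; the reduction fails before that, at the recovery of $r_1$.
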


This result is very important for lattice-based cryptography, one of the most promising post-quantum solutions, because its security is generally based on the following problem.

\begin{defi}
A \emph{lattice} $\Lambda$ is a discrete subgroup of a Euclidean vector space of dimension $n$ (usually $\mathbb R^n$).
Its \emph{rank} rk$(\Lambda)$ is the maximum number of linearly independent vectors in $\Lambda$ and the lattice is called \emph{full rank} if rk$(\Lambda) = n$.
\end{defi}

\begin{defi}
Given a lattice $\Lambda \subset \mathbb{R}^n$ the \emph{Shortest Vector Problem} (SVP) consists in finding
$$
{w} \in \Lambda : \Vert {w} \Vert = \min_{{v} \in \Lambda - \{{0}\}} \Vert {v} \Vert,
$$
where $\| \cdot \|$ is the Euclidean norm.
\end{defi}

\begin{defi}
 Given an approximation function $f$ (e.g., $O$,  $\text{poly}$, $\text{exp}$) of the dimension $n$, a non-zero vector $w$ in a lattice $\Lambda\subset\mathbb R^n$ is \emph{$f(n)$-unique} if $\|v\|>f(n)\|w\|$ for each $v\in \Lambda-\mathbb Zw$.
\end{defi}

It is well known that the LLL algorithm by \cite{LLL82} allows to find the shortest vector in lattices with a $O(2^n)$-unique shortest vector.
Thus, in this case, the SVP is solvable in polynomial time.
It is then conjectured that the $O(1)$-unique SVP is the hardest case, so that the most interesting instances are those in the middle, like the $\text{poly}(n)$-unique SVP.

This has firstly been addressed by Regev in an unpublished work \cite{Reg04}.
He used a solution to the two-point problem to solve the $\Omega(\sqrt{n})$-unique SVP, i.e., when $\sqrt{n}$ is an asymptotic lower bound.
The method is briefly described in the following algorithm.
\begin{algo}\hfill\vspace{-0.3cm}
\begin{enumerate}
\item Create a superposition over the points in $\mathbb{Z}^n$.
Clearly, this is impossible since $\mathbb{Z}$ contains infinite values.
For the purpose of the algorithm, it is sufficient to take $M$ large enough, e.g., $M = \exp(O(n))$, and create the superposition
$$
\ket{\zeta} = \frac{1}{\sqrt{M^n}} \sum_{{x} \in \mathbb{Z}_M^n} \ket{{x}}.
$$
\item The lattice superposition is then obtained by applying $x \longmapsto \sum_{i = 1}^n x_i {b}_i$, where $\{{b}_1, \ldots, {b}_n\}$ is a basis of the lattice, possibly obtained with LLL and therefore reduced.
\item Now, a partition of $\mathbb{R}^n$ into cubes with exactly two points in each cell is required.
The quantum implementation of this operation is a labeling function $\pi(x, \text{par})$, depending on some parameters $\text{par}$, whose result is stored in an auxiliary register.
\item After measuring this register, i.e., fixing a specific cube randomly, the lattice register collapses to a superposition of the two points inside this cube.
This is an instance of two-point problem with the shortest vector of the lattice as solution.
\end{enumerate}
\end{algo}

Actually, the partition requested at Step~3 is efficiently obtainable only if the shortest vector is known.
Regev avoided this requirement by exploiting the $\Omega(\sqrt{n})$-uniqueness to construct a partition that contains at most two points in each cell.
This approximation of the optimal partition is good enough for Step~4 to succeed.

In recent years, new strategies have been introduced presenting different trade-offs between efficiency and space.
Some of the main results are briefly described below.

In 2014, Li, Bao and Fu \cite{LBF14} presented a new algorithm for the dihedral HSP based on the algorithm Shortest Vector (SV) introduced in \cite{LO85} by Lagarias and Odlyzko in 1985 (which exploits the LLL algorithm).
This algorithm takes $O(n)$ quantum space and $O(n^2)$ classical space and time polynomial for $n \leq 6400$.

In 2016, R\"otteler exploited a solution of the so-called hidden shift problem (see \cref{ssec:code}) to solve the HSP over $D_N$, with $N = (q^{d + 1} - 1) / (q - 1)$ for $q$ constant and $d$ corresponding to the input size of the problem.
His algorithm solves the HSP in $O(\log N)$ queries to the hiding function, $O(\text{poly}(\log N))$ quantum time, $O(\log N)$ quantum space and trivial classical post-processing, see \cite{Roe16}.
Another proof that the query complexity of the dihedral HSP is polynomial can be found in \cite{Bel19}.

In 2022, Fujita proposed a heuristic approach to solve the HSP over the dihedral group $D_N$ with $N = 2^n$ \cite{Fuj20}.

For interesting surveys devoted to the DHSP, see \cite{KG06, CS24}.

\subsection{Symmetric groups and the graph isomorphism problem} \label{ssec:sym}

Another interesting instance of non-abelian HSP is when the group is symmetric, namely the set $S_n$ of all the permutations that can be performed on $n$ elements with the composition operation. This has received a lot of attention in research because there is a reduction from the graph isomorphism problem, a famous unsolved computational problem, to the symmetric HSP \cite{CD10}.

However, despite some useful results were obtained in the efficient implementation of the Fourier transform for $S_n$, see \cite{Bea97}, it is proven that both weak Fourier sampling \cite{GSV01,HRT00} and strong Fourier sampling \cite{MRS05, KS05, DMR11} are incapable of solving the symmetric HSP in polynomial time.

As a final remark, we notice that even though the Cayley's theorem states that each group of cardinality $n$ can be homomorphically included in $S_n$, and, hence, the general HSP can be formally reduced to the symmetric HSP, we are not aware of any practical reduction that does not presuppose knowledge of the hidden subgroup.

\subsection{Semidirect products} \label{ssec:semi}
Because of the interpretation of the dihedral group as a semidirect product, groups with the same construction have received most of the attention of the community.

One of the first results is in an unpublished work \cite{RB98} from 1998, where Rötteler and Beth used Fourier sampling over normal subgroups to solve the HSP over the group $(\mathbb{Z}_2^n \times \mathbb{Z}_2^n) \rtimes_\varphi \mathbb{Z}_2$, where the elements of $\mathbb{Z}_2$ work as permutations over the positions of $\mathbb{Z}_2^n \times \mathbb{Z}_2^n$
(this can also be referred to as wreath product $\mathbb{Z}_2^n\wr\mathbb{Z}_2$).
More specifically, we have
\begin{align*}
\varphi(x)\colon \mathbb{Z}_2^n \times \mathbb{Z}_2^n &\longrightarrow \mathbb{Z}_2^n \times \mathbb{Z}_2^n ,\\
(n_1, n_2) &\longmapsto ((x+1) n_1 + x n_2, x n_1 + (x+1) n_2),
\end{align*}
for $x \in \mathbb Z_2$.
The resulting algorithm has a quantum routine that requires polynomial time, but exponential classical post-processing.

Another interesting result was obtained by Grigni, Schulman, Vazirani and Vazirani in 2004 \cite{GSV01} (after a publication in the proceedings of STOC 2001), where they considered the cyclic subgroups $\langle 3 \rangle$, $\langle 2^n \rangle$ of $\mathbb Z_{3 \cdot 2^n}$ and studied the inner semidirect product $\langle 2^n \rangle \rtimes \langle 3 \rangle$ as an example of a more general case.
In particular, they focus on using weak Fourier sampling over subgroups for the HSP over groups $G$ with $|G| / |M_G| \in \exp\big(O\big(\sqrt{\log \log |G|}\big)\big)$, where $M_G = \cap_{H \leq G} N(H)$ is the intersection of normalizers in $G$.

Other examples are group extensions $G$ of abelian groups $A$ by $B$ for $A \lhd G$ and $G / A \cong B$.
This work was generalized by Gavinsky in 2004 \cite{Gav04} to groups $G$ with $|G| / |M_G| \in O(\log |G|)$ and to the cases in which the hidden subgroup $H$ satisfies $|G| / |H M_G| \in O(\log |G|)$.

New achievements with the direct products were obtained by Friedl, Ivanyos, Magniez, Santha, and Sen in 2002 (then published in 2014) \cite{FIMSS14}.
They solved the HSP over groups $\mathbb{Z}_p^k \rtimes_\varphi \mathbb{Z}_{2}$ with $p$ prime and $\varphi(x)\colon n \mapsto (-1)^x n$, in quantum time $\text{poly}(k)$ by using the hidden shift problem (which they call hidden translation) while adopting the classical algorithm.

In their 2005 preprint, then published in 2007 \cite{MRRS07}, Moore, Rockmore, Russell and Schulman started a new line of research.
Considering the affine group  $\text{Aff}(1, p)$, its subgroup of translations is isomorphic to $\mathbb Z_p$ and its multiplicative subgroup is isomorphic to $\mathbb Z_p^\times \cong \mathbb Z_{p-1}$, so that $\text{Aff}(1, p) \cong \mathbb Z_p \rtimes \mathbb Z_{p-1}$.
Starting from this, we can similarly consider the so-called $q$-hedral groups $\mathbb Z_p \rtimes \mathbb Z_q$ with $q$ dividing $p-1$ and $p / q = \text{poly}(\log p)$; the authors used strong Fourier sampling for the HSP over these groups.
In addition, they generalize the result on group extensions by exploiting the following: given $A \lhd G$ such that $G / A \cong B$ with $A$ of size $\text{poly}(\log |B|)$ and assuming that there is an efficient HSP algorithm for $B$, then there is also an efficient HSP algorithm for $G$.
Unfortunately, this construction cannot be iterated more than a constant number of times, whereas in most cases it is required to repeat it a polynomial number of times.

In 2005, Bacon, Childs and van Dam \cite{BCvD05} exploited the Pretty Good Measurement (PGM) they introduced in \cite{BCvD06} to give a general proposal for $A \rtimes_\varphi \mathbb{Z}_p$ where $A$ is an abelian group, $p$ is a prime, and $\varphi$ is a general homomorphism from $\mathbb Z_p$ into $\text{Aut}(A)$.
Firstly, they apply a reduction similar to that of the dihedral HSP.
Thus, by solving the abelian HSP over $A$, they find $H_1 = H \cap (A \times \{0\})$ and, despite $H_1$ may not be normal, they find a way to use quotient reduction.
This reduces the problem to the HSP over $A_2 \rtimes_\varphi \mathbb{Z}_p$ with $A_2 \leq A$ and hidden subgroup trivial or of size $p$ generated by $(a, 1)$ for some $a \in A_2$.
Next, the PGM from \cite{BCvD06} is applied to a product of $k$ coset states, and the problem is reduced to the so-called \textit{matrix sum problem}.
They showed that the HSP over $\mathbb{Z}_{N} \rtimes_\varphi \mathbb{Z}_{p}$, with $N$ arbitrary, $p$ prime, and $N / p = \text{poly}(\log N)$, can be solved in time $\text{poly}(\log N p)$, thus generalizing the result in \cite{MRRS07}.
Moreover, they proved that the HSP over $\mathbb{Z}_p^k \rtimes_\varphi \mathbb{Z}_{p}$, with $k$ fixed and $p$ prime, can be solved in time $\text{poly}(\log p)$. 
In conclusion, the PGM proved to be optimal for distinguishing conjugate hidden subgroups \cite{MR07}.

Starting from these results, in 2009 Gonçalves, Portugal and Cosme \cite{GPC09} obtained a polynomial-time quantum algorithm for the HSP over $\mathbb{Z}_{p} \rtimes_\varphi \mathbb{Z}_{q^s}$ for $p$ and $q$ odd primes with $q \mid p - 1$ and $p / q = \text{poly}(\log p)$.
Without these constraints, this algorithm is subexponential.
The homomorphism $\varphi$ that defines the semidirect product is completely determined by specifying $\phi(1)(1) = \alpha \in \mathbb Z_p^\times$, since $\text{Aut}(\mathbb Z_p)$ is isomorphic to $\mathbb Z_p^\times$.
In their work, the authors focus on a specific value for $\alpha$.
They further generalize their result in an unpublished work \cite{GP11} from 2011, where they consider $\mathbb{Z}_{p^r} \rtimes_\varphi \mathbb{Z}_{q^s}$ for $p \neq q$ odd primes with $p^r / q = \text{poly}(\log p^r)$.

Finally, the highest level of generalization was achieved by van Dam and Dey \cite{vDD14} in 2014 on $\mathbb{Z}_{p^r} \rtimes_\varphi \mathbb{Z}_{q^s}$ for $p$ and $q$ odd primes with $p^r / q^{t - j} = \text{poly}(\log p^r)$, where $0 \leq j <t$ are parameters specific to the group.

Independently of the described chain of improvements, Krovi and Rötteler \cite{KR08} managed to solve the HSP over Weyl-Heisenberg groups $\mathbb{Z}_p^{k + 1} \rtimes_\varphi \mathbb{Z}_p^k$ through weak Fourier sampling.

\subsection{Semidirect products and code equivalence}
\label{ssec:code}

We show here another reduction to the HSP of a well--know problem of interest in cryptography: the \emph{code equivalence problem}. It is strictly connected with the topic of semidirect products.
We start with some basic definitions.

\begin{defi}
Let $1 \le k \le n$ be two integers. A \emph{linear code} $\mathcal{C}$ of \emph{length} $n$ and \emph{dimension} $k$ over the finite field $\mathbb{F}_q$, with $q$ a prime power, also called a $[n,k]_q$ code, is a $k$-dimensional linear subspace of the vector space $\mathbb{F}_q^n$.
A matrix $M \in \mathbb{F}_q^{k \times n}$ is a \emph{generator matrix} of $\mathcal{C}$ if its rows form a basis for $\mathcal{C}$, i.e., if
\begin{equation*}
\mathcal{C} = \{ xM \, | \, x \in \mathbb{F}_q^k \}.
\end{equation*}
\end{defi}

In the post-quantum scenario, codes are used to build key-encapsulation mechanisms and signatures schemes.
Beside the well-known \emph{syndrome decoding problem}, some schemes, such as \cite{BMP20}, exploit the \emph{code equivalence problem}.
Here, we define a version of it which turns out to be a sub-instance of the main problem.

\begin{defi}
    Given two $[n,k]_q$ codes generated by matrices $M$ and $M^\prime$ respectively and such that 
    \begin{equation*}
        M^\prime = S M P
    \end{equation*}
    for some invertible matrix $S \in \gl_k(\mathbb{F}_q)$ and $n \times n$ permutation matrix $P \in \mathcal{S}_n \subseteq \gl_n(\mathbb F_q)$,
    the \emph{Permutation code Equivalence Problem} (PEP) consists of finding such a pair $(S,P) \in \gl_k(\mathbb{F}_q) \times \mathcal{S}_n$, where $\mathcal S_n$ is the subgroup of permutation matrices in $\gl_n(\mathbb F_q)$.
\end{defi}

For further details on coding theory and its connection with cryptography, we refer to \cite{WGR22}.

The reduction from PEP to HSP goes through an intermediate step given by the following problem.

\begin{defi} \label{def:hiddenshift}
    Let $G$ be a finite group and $X$ a finite set.
    Given two functions $f_0, f_1\colon G \to X$ for which there is an element $s \in G$ such that
    \begin{equation*}
        f_1(g) = f_0(sg), \qquad \text{for every } g \in G,
    \end{equation*}
    the group element $s$ is called (\emph{left}) \emph{shift} from $f_0$ to $f_1$.
    The \emph{hidden shift problem} consists of finding the shift element knowing the two functions.
\end{defi}

\begin{prop} \cite{DMR11}
\label{pep:hiddenshift:reduction}
    The permutation code equivalence polynomial-time reduces to a hidden shift problem over the group $\gl_k(\mathbb{F}_q) \times \mathcal{S}_n$.
\end{prop}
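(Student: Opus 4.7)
The plan is to recast PEP as a group-action problem and then detect an orbit representative via a shift between two naturally associated functions. View $G = \gl_k(\mathbb{F}_q) \times \mathcal{S}_n$ as acting on $X = \mathbb{F}_q^{k \times n}$ by $(S,P)\cdot M := S M P$. A PEP instance is then a pair of matrices $M, M'$ lying in the same $G$-orbit, and a valid solution is any $s_0 = (S_0, P_0) \in G$ with $s_0 \cdot M = M'$.

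I would hand the hidden shift oracle the two functions
\begin{equation*}
f_0, f_1 \colon G \longrightarrow X, \qquad f_0(g) := g^{-1}\cdot M = S^{-1} M P^{-1}, \qquad f_1(g) := g^{-1}\cdot M' = S^{-1} M' P^{-1},
\end{equation*}
where $g = (S,P)$. Both are evaluable in time polynomial in $n$, $k$, and $\log q$, requiring only one inversion in $\gl_k(\mathbb{F}_q)$, one permutation inversion, and two matrix products, all classical polynomial-time routines. The point of working with $g^{-1}$ rather than $g$ is to convert the natural right-shift behaviour of the action into a left shift, matching the convention in \cref{def:hiddenshift}. Indeed, writing $M' = s_0\cdot M$, associativity of the action gives
\begin{equation*}
f_1(g) = g^{-1}\cdot(s_0\cdot M) = (g^{-1} s_0)\cdot M = (s_0^{-1} g)^{-1}\cdot M = f_0(s_0^{-1} g),
\end{equation*}
so $s := s_0^{-1} \in G$ is a hidden shift from $f_0$ to $f_1$. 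Given the oracle's answer $s$, I would simply return $s^{-1}$ as the desired PEP solution $(S_0, P_0)$; inversion in $G$ is again polynomial-time, completing the reduction.

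The only delicate point to verify is the non-uniqueness of the shift: the set of valid shifts is the left coset $s_0^{-1} \cdot \mathrm{Stab}_G(M)$, where $\mathrm{Stab}_G(M)$ encodes the automorphism group of the code generated by $M$ and can be non-trivial. This is not a genuine obstacle, because \cref{def:hiddenshift} only requires producing \emph{some} shift, and every element of this coset corresponds to an equivalence pair realizing $M' = S M P$. What must be confirmed is that any representative returned by the oracle post-processes to a valid PEP solution, which is exactly what the displayed computation guarantees.
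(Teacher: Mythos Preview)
Your reduction has a genuine gap: the map $(S,P)\cdot M := SMP$ is \emph{not} a left action of $G=\gl_k(\mathbb{F}_q)\times\mathcal{S}_n$ under the componentwise product. Indeed
\[
(S_1,P_1)\cdot\bigl((S_2,P_2)\cdot M\bigr)=S_1S_2\,M\,P_2P_1,
\qquad
\bigl((S_1,P_1)(S_2,P_2)\bigr)\cdot M=S_1S_2\,M\,P_1P_2,
\]
so the ``associativity of the action'' you invoke fails. Concretely, with your $f_0(S,P)=S^{-1}MP^{-1}$ and $f_1(S,P)=S^{-1}M'P^{-1}$ one gets $f_1(S,P)=S^{-1}S_0MP_0P^{-1}$ while $f_0(s_0^{-1}(S,P))=S^{-1}S_0MP^{-1}P_0$; these agree for all $P$ only when $P_0$ is central in $\mathcal{S}_n$, i.e.\ $P_0=I$. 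So with your functions there is no left shift in general, and the hidden shift oracle has nothing to find.

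The fix is small and brings you exactly to the paper's proof: either define the action as $(S,P)\cdot M:=SMP^{-1}$ (which \emph{is} a left action), or equivalently drop the inverse on the permutation side and take $f_0(S,P)=S^{-1}MP$, $f_1(S,P)=S^{-1}M'P$. Then $(\bar S^{-1},\bar P)$ is a left shift whenever $\bar S M\bar P=M'$, by the one-line check $f_0\bigl((\bar S^{-1},\bar P)(S,P)\bigr)=S^{-1}\bar S M\bar P P=S^{-1}M'P=f_1(S,P)$. Your remarks on polynomial-time computability and on the non-uniqueness of the shift (the coset by $\mathrm{Stab}_G(M)$) are correct and carry over unchanged.
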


\begin{proof}
    Let $G$ be the group
    \begin{equation} \label{codeequiv:HS:group}
        G = \gl_k(\mathbb{F}_q) \times \mathcal{S}_n,
    \end{equation}
    with component-wise group operation
    \[
    (S_1, P_1) \cdot (S_2, P_2) = (S_1 S_2, P_1 P_2).
    \]
    Fixing the $k \times n$ matrices $M$ and $M^\prime$, we define the functions
    \begin{equation} \label{codeequiv:HS:functions}
    \begin{split}
        f_0 (S,P) & := S^{-1} M P, \\
        f_1 (S,P) & := S^{-1} M^\prime P.
    \end{split}
    \end{equation}
    It is clear that
    \[ \bar S M \bar P = M^\prime \text{ for some $\bar  S$ and $\bar P$} \quad \text{if and only if} \quad (\bar S^{-1},\bar P) \text{ is a shift from $f_0$ to $f_1$}.\]
    Hence, solving in polynomial time the hidden shift problem over $G$, given functions $f_0$ and $f_1$ above defined, implies solving efficiently the permutation code equivalence for the codes generated by $M$ and $M^\prime$.
\end{proof}

The second step of the reduction is not straightforward: in fact, not every hidden shift instance can be recast as a HSP one. In order to get the result, we need an additional assumption on the shifted function $f_0$, namely, the fact that it is hiding itself some subgroup.

\begin{prop} \cite{DMR11}
    If the function $f_0\colon G \to X$ in \Cref{def:hiddenshift} hides a subgroup $H_0 < G$, then the hidden shift problem polynomial-time reduces to an instance of the hidden subgroup problem over the group $G^2 \rtimes \mathbb{Z}_2$.
\end{prop}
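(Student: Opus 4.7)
The plan is to construct, from the hidden shift data $(f_0,f_1)$ with $f_1(g)=f_0(sg)$, a function $F\colon\Gamma\to Y$ on $\Gamma:=G^2\rtimes_\varphi\mathbb{Z}_2$ (with $\varphi$ the swap action, as in the dihedral example) that hides a subgroup $K\leq\Gamma$ from which $s$ can be read off in polynomial time. The first preliminary step is to observe that $f_1$ also hides $H_0$: indeed $f_1(g)=f_1(g')$ iff $f_0(sg)=f_0(sg')$ iff $sg\in sg'H_0$ iff $g\in g'H_0$. This parallel hiding is what allows a symmetric construction over the two copies of $G$.

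The natural candidate is the ``tagged'' combination
\[F(g_1,g_2,b)\;:=\;\bigl\{(f_b(g_1),0),\,(f_{b+1}(g_2),1)\bigr\},\]
taking values in the finite set of unordered pairs in $X\times\mathbb{Z}_2$. The labels $\{0,1\}$ keep the two entries distinguishable, while the swap action of $\mathbb{Z}_2$ on the $G^2$ component is paired with the flip $b\mapsto b+1$ that exchanges the roles of $f_0$ and $f_1$. This $F$ is evaluable in polynomial time with a constant number of oracle queries to $f_0$ and $f_1$, so the reduction itself will be efficient.

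Next, I would verify that $F$ hides a subgroup $K\leq\Gamma$ whose even part $K\cap(G^2\times\{0\})$ contains $H_0\times H_0\times\{0\}$ (immediate from both $f_0,f_1$ hiding $H_0$), and whose odd part $K\cap(G^2\times\{1\})$ is a coset of $H_0\times H_0$ carrying the value of $s$, of shape $(a,b,1)H_0^2$ for some $(a,b)$ determined by $s$. Given a generating set of $K$ returned by the HSP oracle, picking any odd generator and solving the resulting elementary coset equation in $G$ recovers $s$ (at worst modulo $\mathrm{Core}_G(H_0)$, which suffices since $s$ is only defined up to that normal subgroup in the first place).

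The main obstacle will be verifying that $F$ really is constant on cosets of a fixed subgroup of $\Gamma$, as opposed to merely inducing an equivalence relation whose blocks depend on the particular $(g_1,g_2)$. The difficulty is that for non-abelian $G$ the naive stabilizer conditions for the odd part of $K$ involve conjugates $g^{-1}sg$, and a direct check gives $(g_1,g_2)$-dependent equations. The delicate content of the proof is to show that the hypothesis that $f_0$ hides $H_0$ (combined with passing to the quotient by $\mathrm{Core}_G(H_0)$ when needed) rigidifies these pointwise conditions into a single $H_0$-coset relation, so that $F$ does give a bona fide hidden subgroup and the HSP on $\Gamma$ returns the information required to solve the hidden shift problem.
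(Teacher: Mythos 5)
Your overall plan is the same as the paper's: pass to $G^{2}\rtimes\mathbb{Z}_2$ with the swap action, merge $f_0$ and $f_1$ into a single function whose hidden subgroup has an odd part that is a coset encoding the shift, and read $s$ off an odd generator. The genuine gap is precisely the step you defer to the end. With your pairing $F((g_1,g_2),b)=(f_b(g_1),f_{b+1}(g_2))$ and the conventions you adopt (hiding means constancy on cosets $gH_0$, left shift $f_1(g)=f_0(sg)$), $F$ simply does not hide any subgroup in general, and no ``rigidification'' supplied by the hiding hypothesis can rescue it. Writing $f_0(g)=gH_0$, the even--even comparison forces the even part of the would-be subgroup to be $H_0\times H_0\times\{0\}$, but $F((g_1,g_2),0)=F((g_1',g_2'),1)$ amounts to $g_1'\in s^{-1}g_1H_0$ and $g_2'\in sg_2H_0$, so the elements $\gamma^{-1}\gamma'$ with $\gamma=((g_1,g_2),0)$ range over $\bigl(g_1^{-1}s^{-1}g_1\bigr)H_0\times\bigl(g_2^{-1}sg_2\bigr)H_0\times\{1\}$, which depends on the base point through conjugates of $s$. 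Already for $G$ the symmetric group on three letters, $H_0=\langle(12)\rangle$ and $s=(123)$, these cosets differ for different $g_1$, so the level sets of $F$ are not the cosets of any fixed subgroup and there is no HSP instance to hand to the oracle. Passing to $G/\mathrm{Core}_G(H_0)$ does not remove the dependence: the conjugate cosets still vary in the quotient unless $s$ is central modulo $H_0$, so the proposed fix fails exactly where the difficulty sits.

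What makes the construction work --- and this is the paper's key choice --- is to couple the $\mathbb{Z}_2$ bit to the \emph{output order} rather than to which function acts on which slot: $f((g,h),0)=(f_0(g),f_1(h))$ and $f((g,h),1)=(f_1(h),f_0(g))$. Then right multiplication by an odd element swaps the group components exactly where the function swaps its outputs, the base-point-dependent cross terms cancel, and (reading hiding as invariance under left multiplication by $H_0$, which is the invariance the PEP function $f_0(S,P)=S^{-1}MP$ actually has, together with the left shift) the level sets are the cosets of the fixed subgroup $H'=\bigl((H_0,s^{-1}H_0s),0\bigr)\cup\bigl((H_0s,s^{-1}H_0),1\bigr)$. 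Note the even part is $H_0\times s^{-1}H_0s$, not $H_0\times H_0$ as in your sketch --- a symptom of the same laterality issue. No core quotient is needed in the recovery: the full set of shifts is the coset $H_0s$, any odd sample of $H'$ has first component in $H_0s$, and that component is itself a shift, which is exactly the paper's final step. So the missing verification in your proposal is not a routine tightening; it is the point where your specific $F$ fails, and the repair is the swapped-output pairing (with the side of the cosets on which $f_0$ is constant matched to the side of the shift), which is what the paper's proof supplies.
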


\begin{proof}
    Let $f_0$ and $f_1$ be two functions from $G$ to the finite set $X$ for which there exists a left shift from the former to the latter, and let $G^\prime$ be the group 
    \begin{equation} \label{codeequiv:HSP:group}
        G^\prime = G^2 \rtimes \mathbb{Z}_2.
    \end{equation}
    For every element $((g,h),b) \in G^2 \rtimes \mathbb{Z}_2$, we define the function $f\colon G^\prime \to X \times X$ as
    \begin{equation} \label{codeequiv:HS:function}
        f((g,h), b) = 
        \begin{cases}
            (f_0(g),f_1(h)) & \text{if } b=0, \\
            (f_1(h),f_0(g)) & \text{if } b=1.
        \end{cases}
    \end{equation}
    We show that solving the hidden shift problem is equivalent to recovering the subgroup $H^\prime < G^\prime$ hidden by the function $f$, exploiting the assumption that $f_0$ hides the subgroup $H_0$. \par
    One can check that, under our assumption, $f$ hides the subgroup
    \begin{equation*}
        H^\prime = \bigl( (H_0, s^{-1}H_0s), 0 \bigr)\cup \bigl( (H_0s, s^{-1}H_0),1\bigr),
    \end{equation*}
    where $s$ is a shift from $f_0$ to $f_1$ and $H_0$ is the subgroup of $G$ hidden by $f_0$. \par
    Once one has found the hidden subgroup for $f$, it is easy to recover from that a solution of the hidden shift. In order to do so, it is important to notice that the set of all the shifts from $f_0$ to $f_1$ is given by $H_0s$, with $s$ a particular shift. Then, the following procedure can be exploited:
    \begin{enumerate}
        \item sample from $H^\prime$ an element $((g_1,g_2),b)$;
        \item if $b=0$, discard it and go back to Step 1;
        \item if $b=1$, then $g_1 \in H_0s$, hence it is a shift.
    \end{enumerate}
    Therefore, solving in polynomial time the hidden subgroup problem over $G^\prime$ given by the function $f$ allows us to define an efficient procedure to solve the hidden shift problem.
\end{proof}

We have not yet reduced PEP to HSP: in order to do so, we have to check that the hypothesis required by the previous proposition holds for the reduction shown in \Cref{pep:hiddenshift:reduction}. The following result claims exactly this.

\begin{prop} \cite{DMR11}
    The function $f_0$ defined in \Cref{codeequiv:HS:functions} hides the subgroup
    \begin{equation} \label{codeequiv:subgp}
        H_0 = \{(S,P)\in \gl_k(\mathbb{F}_q) \times \mathcal{S}_n \, | \, S^{-1}MP=M \} < \gl_k(\mathbb{F}_q) \times \mathcal{S}_n. 
    \end{equation}
\end{prop}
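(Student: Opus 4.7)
The plan is to recognize $H_0$ as the stabilizer of the matrix $M$ under a natural action of $G = \gl_k(\mathbb{F}_q) \times \mathcal{S}_n$ on the set of $k \times n$ matrices, and then conclude by orbit-stabilizer reasoning. This avoids any ad hoc manipulation and makes the coset structure transparent.

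First, I would verify directly that $H_0$ is a subgroup of $G$. The identity $(I_k, I_n)$ lies in $H_0$ since $I_k^{-1} M I_n = M$. Closure follows from a one-line check: if $(S_1, P_1)$ and $(S_2, P_2)$ lie in $H_0$, then
\[
(S_1 S_2)^{-1} M (P_1 P_2) = S_2^{-1}(S_1^{-1} M P_1)P_2 = S_2^{-1} M P_2 = M.
\]
For inverses, $S^{-1} M P = M$ is equivalent to $M = S M P^{-1}$, i.e.\ $(S^{-1})^{-1} M P^{-1} = M$, so $(S,P)^{-1} = (S^{-1}, P^{-1}) \in H_0$.

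Next, I would introduce the right action of $G$ on $\mathbb{F}_q^{k \times n}$ defined by $M \cdot (S,P) := S^{-1} M P$. Associativity of the action reduces to
\[
(M \cdot (S_1, P_1)) \cdot (S_2, P_2) = S_2^{-1}(S_1^{-1} M P_1) P_2 = (S_1 S_2)^{-1} M (P_1 P_2) = M \cdot ((S_1, P_1)(S_2, P_2)),
\]
and the identity of $G$ clearly acts trivially. Under this action, $f_0$ is nothing but the orbit map $g \mapsto M \cdot g$, and $H_0$ is by construction the stabilizer $\operatorname{Stab}_G(M)$.

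Finally, I would invoke the orbit-stabilizer correspondence: two elements $g_1, g_2 \in G$ satisfy $f_0(g_1) = f_0(g_2)$ if and only if $g_1 g_2^{-1}$ fixes $M$, i.e.\ $g_1 g_2^{-1} \in H_0$, which is precisely the coset condition defining the hiding property. The step requiring care is purely notational: the action here is a right action, so the cosets naturally delivered by the stabilizer argument need to be matched against the convention in the hiding definition in \cref{ssec:hsp}. Beyond this bookkeeping, no deep machinery is required, since the claim is essentially a restatement of orbit-stabilizer for the action defining $f_0$.
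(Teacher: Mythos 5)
Your argument is correct in substance, and in fact the paper does not prove this proposition at all (it is stated with a citation to \cite{DMR11}), so the stabilizer/orbit viewpoint you give is a perfectly good way to supply the missing proof: $H_0$ is the stabilizer of $M$ under the right action $M\cdot(S,P)=S^{-1}MP$, and the level sets of the orbit map $f_0$ are exactly the cosets of that stabilizer, which is precisely the hiding property.

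The one point I would not let you wave away as ``purely notational'' is the left/right coset issue. Your computation shows $f_0(g_1)=f_0(g_2)$ if and only if $g_1g_2^{-1}\in H_0$, i.e.\ $f_0$ is constant exactly on the \emph{right} cosets $H_0g$, whereas the definition of hiding in \cref{ssec:hsp} is phrased with left cosets $gH$. Since $H_0$ is in general not normal in $\gl_k(\mathbb{F}_q)\times\mathcal{S}_n$, a function whose level sets are the right cosets of $H_0$ does not hide $H_0$ in the literal left-coset sense (a quick example: $k=1$, $q=2$, $M=(1,0,0)$, where $H_0\cong\mathcal S_2$ and $f_0$ is not constant on $gH_0$ for $g$ moving the nonzero column). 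So you should either state explicitly that $f_0$ hides $H_0$ with respect to right cosets (which is equally usable by the standard method), or precompose with inversion, replacing $f_0$ by $g\mapsto f_0(g^{-1})$, i.e.\ $(S,P)\mapsto SMP^{-1}$, which hides the same subgroup $H_0$ by left cosets. With that one sentence added, your proof is complete and, if anything, more transparent than an ad hoc verification.
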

\begin{cor}
    The permutation code equivalence problem between two $[n,k]_q$ codes efficiently reduces to a hidden subgroup problem over $(\gl_k(\mathbb{F}_q) \times \mathcal{S}_n)^2 \rtimes \mathbb{Z}_2$.
\end{cor}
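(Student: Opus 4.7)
The plan is to obtain this corollary as a straightforward chaining of the three propositions that immediately precede it, so no new construction is required beyond verifying that the reductions compose in polynomial time.

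First, I would invoke the reduction from Proposition \ref{pep:hiddenshift:reduction}: starting from two generator matrices $M$ and $M'$ of the given $[n,k]_q$ codes, I form the group $G=\gl_k(\mathbb{F}_q)\times\mathcal{S}_n$ and the pair of functions $f_0(S,P)=S^{-1}MP$ and $f_1(S,P)=S^{-1}M'P$ defined in \eqref{codeequiv:HS:functions}. This step replaces the PEP instance with a hidden shift instance for $(f_0,f_1)$ on $G$, and the replacement is clearly polynomial in $n$, $k$, and $\log q$ since only matrix products are involved.

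Next, I would apply the preceding proposition on the structure of $f_0$, which guarantees that $f_0$ hides the stabilizer subgroup $H_0$ of $M$ defined in \eqref{codeequiv:subgp}. With this hypothesis in hand, the hidden shift instance satisfies exactly the assumption required to apply the second proposition of this subsection: the hidden shift problem on $G$ whose shifted function hides a subgroup reduces in polynomial time to an HSP instance on $G^2\rtimes\mathbb{Z}_2$, by means of the function $f$ in \eqref{codeequiv:HS:function} whose hidden subgroup is the one exhibited in that proof.

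Substituting $G=\gl_k(\mathbb{F}_q)\times\mathcal{S}_n$ into $G^2\rtimes\mathbb{Z}_2$ yields the claimed ambient group $\bigl(\gl_k(\mathbb{F}_q)\times\mathcal{S}_n\bigr)^2\rtimes\mathbb{Z}_2$. The composition of two polynomial-time reductions is polynomial-time, so the overall reduction from PEP to this HSP instance is efficient as stated. The only point that requires a moment of care, and which I would briefly stress, is that the intermediate hidden-shift step genuinely needs the subgroup-hiding assumption on $f_0$; this is precisely what the third proposition supplies, so no obstacle remains.
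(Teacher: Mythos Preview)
Your proposal is correct and matches the paper's intended argument: the corollary is stated without proof precisely because it follows by chaining the three preceding propositions exactly as you describe. Your added remarks on polynomial-time composition and on why the subgroup-hiding hypothesis on $f_0$ is indispensable are accurate and fill in the only points the paper leaves implicit.
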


Since the permutation code equivalence can be recast as an instance of the HSP on a semidirect product of groups, the techniques discussed in \Cref{ssec:semi} apply to find its solution.

\subsection{Fourier sampling on other groups}
\label{ssec:other}

Despite most of the work focused on the HSP over particular instances of semidirect product, there were sporadic cases dealing with other group structures.

For instance, in 2007, Ivanyos, Sanselme, and Santha \cite{ISS07} solved the HSP over $p$-groups $G$ (groups with order a power of $p$) that are \emph{extra-special}, i.e., with center $Z(G) := \{z \in G \,|\,  z g = g z, \,  g \in G \}$ cyclic of order $p$ and $G / Z(G)$ elementary abelian (we recall that an elementary abelian group is a group such that its non-trivial elements have same order).
Later, in \cite{ISS10}, they extended the result to nil-2 groups (groups with $[[G, G], G] = \{1\}$, where $[G, G] := \langle\{g^{- 1} h^{- 1} g h \,|\, g, h \in G\}\rangle$ is the commutator of $G$) by exploiting a generalization of the hiding oracle $f$ introduced in the 2002 preprint of \cite{FIMSS14}.
In particular, they define a \emph{hiding procedure} for a subgroup $H$ of $G$ that allows to create quantum states which are orthogonal depending on the coset of $H$.
Since the abelian HSP algorithm still works when the oracle is replaced by such a procedure, they reduced the HSP over nil-2 groups to those having exponent $p$ (i.e., $p = \text{lcm}\{\text{ord}(g) \,|\, g \in G\}$) showing that a hidden subgroup $H$ can be found if there is an efficient hiding procedure for $H [G, G]$.

In 2010, the extension introduced in \cite{MRRS07} was used by Denney, Moore, and Russell \cite{DMR10} to solve the HSP over the special linear group $\text{SL}(2,p^r)$, the projective special linear group $\text{PSL}(2, p^r) := \text{SL}(2,p^r) / \{\pm \mathrm{Id}\}$, and the projective general linear group $\text{PGL}(2,p^r) := \text{GL}(2,p^r) / \{\text{scalar matrices}\}$ when the hidden subgroups are stabilizers $G_s := \{g \in G \,|\, g s = s\}$, for $s \in S$, where $S$ is the codomain of the hiding function $f$.
In particular, their idea was to restrict themselves to the standard Borel subgroup $B$ of the upper triangular matrices, which is isomorphic to $\text{Aff}(1, p^r)$, and to apply a strong Fourier sampling on this group determining $B \cap G_s$ and deducing~$s$.

Further developments on Borel subgroups have been obtained in \cite{Iva12} on the general linear group $\text{GL}(n, q)$ for particular values of $q$ not much smaller than $n$, while \cite{DIK14} focuses on hidden parabolic subgroups finding a solving algorithm polynomial in $\log q$ and $n$.

It is important to notice that all the described results work for groups which have something in common with abelian groups.
For example, the procedure for the Dedekind HSP works also for groups with many normal subgroups, as well as for semidirect products of abelian groups $G = A \rtimes B$ with $A \cong A \times \{0\} \lhd G$ and $B \cong G / (A \times \{0\})$.

\subsection{General approaches for non-abelian groups}
\label{ssec:general}

As observed in the previous sections, the literature is full of different results on the solution of the HSP in specific types of finite groups.
Here we describe some theoretical attempts to find a general strategy for non-abelian groups, always based on Fourier sampling.

One of the first general results was obtained in 2004 by Ettinger, Høyer and Knill.
They proved that the HSP over a finite group $G$ has quantum query complexity $\text{poly}(\log |G|)$, i.e., $O(\log |G|)$ queries to the function oracle provide sufficient statistical information to solve the HSP.
Despite this result being exponentially better than the classical alternatives, the general algorithm they provided to manipulate this information requires exponential time.
See \cite{EHK04}.

From the algebraic point of view, the HSP over non-abelian groups can be studied following two approaches.
The first consists in exploiting simple groups, the second in finding a reduction method.

\begin{defi}
A group $G$ is \emph{simple} if its normal subgroups are only $\{1\}$ and~$G$.
\end{defi}

The theorem of classification of finite simple groups states that every finite simple group is either cyclic of prime order, alternating of degree at least 5, in one of the 16 families of groups of Lie type, or one of 26 sporadic groups.

\begin{defi}
The \emph{composition series} of a finite group $G$ is a sequence
\[
\{1\} = H_0 \lhd H_1 \lhd \ldots \lhd H_l = G
\]
where each $H_i$ is a maximal normal subgroup of $H_{i + 1}$.
The integer $l$ is called \emph{length} of the composition and has size $O(\log |G|)$.
Each $H_{i + 1} / H_i$ is a simple group called \emph{composition factor}.\\ 
If all the composition factors of a group are abelian, the group is called \emph{solvable}.
\end{defi}

\begin{prop}
\label{th:wan1} \cite{Wan10}
The HSP over a generic finite group $G$ can be efficiently solved if both the following conditions hold:
\begin{enumerate}
\item there is an efficient algorithm for solving the HSP over simple groups;
\item for each HSP with $f$ on a non-simple $G$ hiding $H$, given a proper normal subgroup $N$ of $G$, the following functions are both efficiently implementable:
\begin{itemize}
    \item $\bar{f}$ on $G/N$ hiding $\bar{H} = \{h N \,|\, h \in H\}$;
    \item $f|_N$ on $N$ hiding $H \cap N$.
\end{itemize}
\end{enumerate}
\end{prop}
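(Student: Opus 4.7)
The plan is to prove \cref{th:wan1} by induction on $|G|$, unraveling $G$ along a composition series (whose length is $O(\log |G|)$).

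For the base case, when $G$ is simple, condition (1) directly supplies a polynomial-time quantum algorithm returning generators of $H$.

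For the inductive step, assume $G$ is not simple and fix a proper nontrivial normal subgroup $N \triangleleft G$. By condition (2), the induced oracles $\bar{f}$ on $G/N$ (hiding $\bar{H} = HN/N$) and $f|_N$ on $N$ (hiding $H \cap N$) are both efficiently implementable. Since $|G/N|, |N| < |G|$, the inductive hypothesis produces in polynomial time generators $\bar{h}_1, \ldots, \bar{h}_k$ of $\bar{H}$ and generators $a_1, \ldots, a_r$ of $H \cap N$. To assemble generators of $H$ itself, for each $\bar{h}_i$ I would pick a lift $h_i \in H$ satisfying $h_i N = \bar{h}_i$; by a standard correspondence-theorem argument, based on the short exact sequence $1 \to H \cap N \to H \to \bar{H} \to 1$, the set $\{a_1, \ldots, a_r, h_1, \ldots, h_k\}$ then generates $H$. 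Since the recursion depth is bounded by the composition-series length, i.e.\ $O(\log |G|)$, and each level runs in polynomial time, the total cost stays polynomial in $\log |G|$.

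The main obstacle is the lifting step: producing an explicit element $h_i \in H \cap g_i N$ given only a coset representative $g_i$ of $\bar{h}_i$ in $G$. Condition (2) provides hiding-function oracles for $\bar{f}$ and $f|_N$, but not a search oracle that locates $H$ inside $g_i N$, and naively testing whether $f(g_i n^{-1}) = f(1)$ for $n \in N$ is a search problem potentially of exponential cost. The remedy, implicit in works such as \cite{ISS10}, is to strengthen condition (2) to a \emph{hiding procedure}: instead of a black-box function, one assumes the ability to prepare coset states for $H$ from coset states for $\bar{H}$ and $H \cap N$, so that the reconstruction of $H$ bypasses the need to pin down explicit coset representatives. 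This is precisely why \cref{th:wan1} is formulated as a set of sufficient conditions rather than as an unconditional theorem for the general non-abelian HSP.
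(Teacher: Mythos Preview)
The paper does not supply its own proof of this proposition; it is stated with a bare citation to Wang's thesis \cite{Wan10}. So there is no in-paper argument to compare against, and your write-up has to stand on its own.

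Your inductive skeleton---recurse along a composition series, invoke condition~(1) at simple factors, and use condition~(2) to pass to $N$ and $G/N$---is the natural approach and is how such reduction results are typically argued. You are also right to isolate the lifting step as the crux: knowing generators of $\bar H = HN/N$ and of $H\cap N$ does \emph{not} in general pin down $H$ as a subset of $G$ (already in $G=\mathbb Z_2\times\mathbb Z_2$ with $N=\mathbb Z_2\times\{0\}$ there are two distinct order-$2$ subgroups sharing the same $H\cap N$ and $HN$), and turning a coset representative $g_i$ of $\bar h_i$ into an honest element of $H$ is a search over $g_iN$ that the oracles for $\bar f$ and $f|_N$ do not obviously support.

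Where your argument breaks down is the final paragraph. You do not actually close the gap; you instead propose replacing condition~(2) by a stronger ``hiding procedure'' hypothesis in the style of \cite{ISS10}. That would prove a different proposition, not \cref{th:wan1} as stated. If you believe the stated hypotheses suffice, you need a genuine mechanism for the lift---for instance, an argument that because condition~(2) is assumed for \emph{every} proper normal $N$ of every non-simple ambient group encountered in the recursion, one can alternate subgroup and quotient reductions so as to sidestep the bad case $HN=G$, or some other device that extracts a representative in $H$ from the available oracles. If instead you believe the hypotheses are too weak as written, the honest conclusion is that the proposition is imprecisely quoted and one must consult \cite{Wan10} for the intended formulation. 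Either way, appealing to hiding procedures here is a change of statement, not a proof of the one in front of you.
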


\begin{rem}
When the group is solvable, only the second condition is required.
\end{rem}
About the second approach, finding reduction methods consists of passing through one of the main reduction types: subgroups or quotients.

\begin{defi} \label{def:wang}
Given an instance of HSP with $f$ over a generic finite group $G$ hiding $H$, it is possible to apply:
\begin{itemize}
\item \textit{Subgroup reduction}: if it is possible to find in polynomial time a subgroup $H \subseteq G_1 \subseteq G$ and an isomorphism $g\colon G_2 \overset{\sim}{\longrightarrow} G_1$ computable in polynomial time, then $f|_{G_1} \circ g$ hides the subgroup $g^{- 1}(H)$ of $G_2$.
Hence, solving this HSP returns $g^{- 1}(H) = \langle u_i \rangle_i$ so that $\langle g(u_i) \rangle_i = H$.
\item \textit{Quotient reduction}: if it is possible to find in polynomial time a subgroup $H_1$ normal in $G$ contained in $H$, with $H_1 = \langle u_i \rangle_i$ and an isomorphism $g\colon G_2 \overset{\sim}{\longrightarrow} G / H_1$ computable in polynomial time, then $\bar f\circ g$ (where $\bar f$ is the map induced by $f$ on $G/H_1$) hides the subgroup $g^{- 1}(H / H_1)$ of $G/H_1$.
Hence, solving this HSP returns $g^{- 1}(H / H_1) = \langle v_j \rangle_j$ so that $\langle g(v_j) \rangle_j = H / H_1$ and $\langle u_i, g(v_j) \rangle_{i,j} \cong H$.
\end{itemize}
\end{defi}

The first reduction can be applied recursively to obtain the following result.

\begin{theo} \label{th:wan2} \cite{Wan10}
An instance of HSP with $f$ on $G$ hiding $H$ can be efficiently solved if there is a polynomial-time algorithm that finds a maximal subgroup $G_1$ containing $H$.
\end{theo}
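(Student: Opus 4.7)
The plan is to iteratively invoke the subgroup reduction of \cref{def:wang} along a maximal chain. Set $G^{(0)} := G$ and $f^{(0)} := f$. At iteration $i$, invoke the hypothesized polynomial-time algorithm on the HSP instance $(G^{(i)}, f^{(i)})$ to obtain a maximal subgroup $G^{(i+1)}$ of $G^{(i)}$ that contains $H$, and apply subgroup reduction with $G_1 = G^{(i+1)}$, $G_2 = G^{(i+1)}$, and $g = \mathrm{id}_{G^{(i+1)}}$. This yields a new HSP instance on $G^{(i+1)}$ with hiding function $f^{(i+1)} := f^{(i)}\big|_{G^{(i+1)}}$, which still hides $H$ because $H \subseteq G^{(i+1)}$. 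The recursion halts when the algorithm reports that no maximal subgroup properly containing $H$ exists inside $G^{(k)}$, which is precisely the condition $G^{(k)} = H$; at that point the generators of $G^{(k)}$ produced along the chain form the desired output.

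Correctness is transparent: the hidden subgroup is the same $H$ at every level of the chain, and the terminal group coincides with $H$. For the complexity, Lagrange's theorem applied to each strict inclusion $G^{(i+1)} \subsetneq G^{(i)}$ gives $|G^{(i)}| / |G^{(i+1)}| \geq 2$, so the chain length $k$ is bounded by $\log_2 |G|$. Since the natural input size of the HSP on $G$ is $\Theta(\log |G|)$, the recursion depth is polynomial in the input, and combining this with the polynomial-time cost of each iteration assumed in the hypothesis gives an overall polynomial-time algorithm.

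The main obstacle is the correct handling of the base case: the maximal-subgroup routine must reliably distinguish $H = G^{(k)}$ (in which case no proper maximal subgroup can contain $H$) from $H \subsetneq G^{(k)}$, and must output an explicit generating set of $G^{(i+1)}$ at every call so that the recursion can proceed and the final answer can be assembled. This is a mild but essential requirement which is implicit in the theorem's hypothesis, and it is the only non-trivial point once subgroup reduction is in hand.
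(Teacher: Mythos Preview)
Your proposal is correct and follows the same route the paper indicates: the sentence preceding \cref{th:wan2} says only that ``the first reduction can be applied recursively,'' and your argument spells this out with the obvious $\log_2|G|$ depth bound from Lagrange. One small remark: in the paper the base case $H=G^{(k)}$ is detected not by the maximal-subgroup oracle itself but by repeated coset measurements (see Step~1 of \cref{algo:wang}), so you need not assume that the hypothesized routine signals this condition.
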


The second reduction from \cref{def:wang} can lead to the following result.

\begin{theo} \label{th:wan3} \cite{Wan10}
An instance of HSP can be efficiently solved if, after recursively repeating the quotient reduction over the largest normal subgroup included in the hidden subgroup, the trivial group is the only normal subgroup in the reduced hidden subgroup.
\end{theo}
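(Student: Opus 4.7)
The plan is to construct a recursive algorithm realizing the chain of quotient reductions from \cref{def:wang}. Set $(G^{(0)}, f^{(0)}, H^{(0)}) := (G, f, H)$, and at iteration $i$ determine a generating set $\{u_j^{(i)}\}_j$ for the largest subgroup $N^{(i)} \trianglelefteq G^{(i)}$ contained in $H^{(i)}$. Applying the quotient reduction of \cref{def:wang} with $H_1 = N^{(i)}$ produces in polynomial time the new instance $(G^{(i+1)}, f^{(i+1)}, H^{(i+1)})$, where $G^{(i+1)} := G^{(i)} / N^{(i)}$, $f^{(i+1)}$ is the induced hiding function on the quotient, and $H^{(i+1)} := H^{(i)} / N^{(i)}$. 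Once generators $\{v_k^{(i+1)}\}_k$ of $H^{(i+1)}$ inside $G^{(i+1)}$ become available (recursively), \cref{def:wang} ensures that the union of $\{u_j^{(i)}\}_j$ with any lifts $\{\tilde v_k^{(i+1)}\}_k \subset G^{(i)}$ of the $v_k^{(i+1)}$ generates $H^{(i)}$.

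By the hypothesis of the theorem, after finitely many iterations the reduced hidden subgroup $H^{(m)}$ contains no non-trivial subgroup that is normal in $G^{(m)}$; we interpret this as the base case $H^{(m)} = \{1\}$, for which no generators need to be added. Unwinding the recursion from level $m$ down to level $0$ then assembles a generating set of $H = H^{(0)}$ by successively combining, at each level $i$, the stored $\{u_j^{(i)}\}_j$ with lifts of the already-known generators of $H^{(i+1)}$. Since $|G^{(i+1)}|$ strictly divides and is strictly less than $|G^{(i)}|$ whenever $N^{(i)} \neq \{1\}$, the recursion halts after $m \le \log_2 |G|$ steps, so the total number of invocations of the quotient reduction is polynomial in $\log|G|$.

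The main obstacle is the subroutine that identifies $N^{(i)}$ together with its generators at each level: since $H^{(i)}$ is accessible only through the hiding oracle $f^{(i)}$, extracting the largest normal subgroup of $G^{(i)}$ contained in $H^{(i)}$ is, in general, highly non-trivial, and it is precisely this efficiency that the hypothesis of the theorem implicitly demands. Once this extraction and each invocation of the quotient reduction run in polynomial time, correctness reduces to the bookkeeping argument of the previous paragraph, and overall efficiency follows from composing polynomially many polynomial-time subroutines, yielding the claimed efficient solution of the HSP.
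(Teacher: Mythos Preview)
Your recursive framework and the $O(\log|G|)$ bound on the depth of the reduction chain are correct, and the bookkeeping for reassembling generators via \cref{def:wang} is fine. But there is a genuine gap at the point you yourself flag as ``the main obstacle'': you say that producing generators for the largest normal subgroup $N^{(i)} \trianglelefteq G^{(i)}$ contained in $H^{(i)}$ is ``precisely this efficiency that the hypothesis of the theorem implicitly demands.'' It is not. The hypothesis of the theorem speaks only to the \emph{termination} condition---that after enough quotient reductions the reduced hidden subgroup contains no non-trivial subgroup normal in the ambient group. It says nothing about how one locates $N^{(i)}$ along the way, and treating that step as assumed leaves the argument circular.

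The missing ingredient, and the actual content of the paper's justification, is that weak Fourier sampling \emph{always} supplies this subroutine: as recalled earlier in the survey (the result of \cite{GSV01}; see also \cite{HRT00}), running weak Fourier sampling on any instance $(G^{(i)}, f^{(i)})$ and intersecting the kernels of the sampled representations yields, with high probability and in polynomial time, exactly the largest subgroup of $H^{(i)}$ that is normal in $G^{(i)}$. This is what makes each quotient-reduction step constructive rather than hypothetical. Once you invoke weak Fourier sampling in place of your assumed black box, your recursion becomes the paper's argument verbatim.
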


This is obtained because, in general, the weak Fourier sampling returns a normal subgroup of $G$ that is the largest contained in $H$, on which quotient reduction can be applied.
This procedure can be repeated in the quotient group until $\{1\}$ is the largest hidden subgroup of $H$ normal in $G$, that should happen after $O(\log |G|)$ reductions, see \cite{EHK04}.
However, when $H$ is not simple, other methods, such as subgroup reduction, are required, as seen in the case of the dihedral group.

With all these considerations, Wang, in his 2010 master's thesis \cite{Wan10}, proposed a possible algorithm for a general instance of HSP with $f$ on $G$ hiding $H$.
\begin{algo}\label{algo:wang}\hfill\vspace{-0.3cm}
\begin{enumerate}
\item Coset measurements allow to determine whether $H = G$ or not: supposing to repeat them, if $k$ measurements return the same value, then $H = G$ with probability at least $1 - 2^{- k}$.
In this case, return $G$.
Otherwise, if two different values are found, then $H \subsetneq G$.
\item If $G$ is a particular case with an efficient algorithm for solving the HSP, then execute it and return $H$.

\item If $G$ is simple, apply subgroup reduction over a maximal subgroup $G'$ containing $H$.
This can be repeated recursively until $H$ is found.
Return $H$.

\item If $G$ is not simple, follow \cref{th:wan3} by applying weak Fourier sampling and quotient reduction until possible.
If the last $H'$ found is trivial, then find and return $H$.

\item If the last $H'$ found is not trivial, use an alternative technique.
For example, search a normal subgroup of $G$ for \cref{th:wan1} or apply subgroup reduction as specified in \cref{th:wan2}.
If necessary and possible, repeat recursively to find and return $H$.
\end{enumerate}
\end{algo}

\begin{rem}
Clearly, this algorithm allows to solve the HSP when the group has a particular structure by means of the reductions in \cref{th:wan1}.
However, this is not universal because the last step can be very tricky.
\end{rem}

Applying the described approaches when $G$ is a Dedekind group allows to use the weak Fourier sampling to get a sequence of subgroups each containing the following one and $H$.
Since they are all normal, this results in $G = G_0 \rhd G_1 \rhd \ldots G_m \rhd H \rhd \{1\}$, that is a subsequence of the composition series of $G$.
Consequently, one can apply one step of quotient reduction, moving to $G/G_1$ that is simple, and then continue with repeated subgroup reductions as Step~3 in \cref{algo:wang}.

Now, we consider the dihedral group and the resulting reductions.
By \cref{prop:subsDN}, the possible composition series for $D_N$ are 
$$
\{(0, 0)\} = H_N \lhd \ldots \lhd
\begin{Bmatrix}
H_{2, 0} \cong D_{N / 2}, \\
H_1 \cong \mathbb{Z}_N, \\
H_{2, 1} \cong D_{N / 2}
\end{Bmatrix}
\lhd H_{1, 0} = D_N,
$$
if $N$ is even, and only the central line if $N$ is odd.
In particular, a HSP hiding $H_{d,r}$ can be solved by means of the following reductions:
\begin{itemize}
    \item $H_{d, r}$ contains the normal subgroup $H_d = H_1 \cap H_{d, r}$ so that, after a subgroup reduction to $H_1 \cong \mathbb{Z}_N$ (a cyclic group), the solution of the HSP for $H_1 \cap H_{d, r}$ allows to find $d$.
    \item Since the result $H_d$ is a normal subgroup of $D_N$ included in $H_{d, r}$, quotient reduction can be exploited.
    The obtained group is $D_N / H_d \cong D_N / \mathbb{Z}_{N / d} \cong D_d$, therefore the composition series is the same as in the previous diagram with $d$ replacing $N$, so that we can study directly the general case.
    \item The hidden subgroup is now $H_{N, r}$ with $r$ unknown, but quotient reduction is no more helpful.
    For subgroup reduction, the possible subgroups of $D_N$ are all $H_{a, b}$ with $a$ dividing $N$ and $b=r - qa$, so that $H_{a, b} \cong D_{N / a}$ and the resulting hidden subgroup is generated by $(q, 1)$.
    \item Repeating this subgroup reduction allows to determine $r$ in only $O(\log (2N))$ steps.
\end{itemize}
In conclusion, the dihedral HSP can be reduced to finding $b \equiv r \pmod{a}$, where $a$ divides $N$.

\subsection{Black-box approaches}
\label{ssec:black}

Since Fourier sampling is not always efficient, it is interesting to find methods for solving the HSP with alternative ways to sample from the relevant groups.
One promising proposal is the black-box groups approach on which we focus in this section.

\begin{defi}
A group $G$ of cardinality $N$ is a \emph{black-box group} if, given $m \geq 0$ and $n = O(\log N)$, all the following conditions hold:
\begin{enumerate}
\item $G$ has $m$ generators $g_1, \ldots, g_m$ that can be obtained efficiently;
\item multiplication and inversion can be performed efficiently through some oracles;
\item each element can be represented as a string of $n$ bits;
\item this binary representation is unique or it is possible to check element equality through an efficient oracle.
\end{enumerate}
Specifically, $n$ is the \emph{encoding length} and $m n$ is the \emph{input size}.

The black-box group has \emph{unique encoding} if in Condition 4 the first case is true.
\end{defi}

The first results on black-box groups were obtained in 2001, when Ivanyos, Magniez, and Santha \cite{IMS01} proposed quantum algorithms for solving the following:
\begin{itemize}
\item The normal HSP over solvable and permutation black-box groups in polynomial time (without any assumption on computability of non-commutative Fourier transforms).
\item The HSP over a black-box $G$ with unique encoding, with time polynomial in the input size plus the order of the commutator $[G,G]$.
In particular, this allows to solve the HSP over \emph{extra-special} $p$-groups in polynomial time in the input size plus $p$.
\item The HSP over black-box group $G$ with unique encoding and a normal elementary abelian 2-subgroup $N$ in polynomial time in the input size plus $|G / N|$.
This includes the particular cases where $G / N$ is cyclic.
\end{itemize}

In the work \cite{FIMSS14}, already mentioned among the results on Fourier sampling, Friedl, Ivanyos, Magniez, Santha, and Sen considered the following particular case of solvable black-box groups.

\begin{defi}
The \emph{derived series} of a group $G$ is defined inductively as
\[
\begin{cases}
G^{(0)} := G, \\
G^{(k + 1)} := [G^{(k)}, G^{(k)}],
\end{cases}
\]
so that $G=G^{(0)} \rhd G^{(1)} \rhd \ldots$.
An abelian group is \emph{smoothly abelian} if it can be expressed as the direct product of a subgroup of bounded exponent and a subgroup of polylogarithmic size in the order of the group.
A solvable group is \emph{smoothly solvable} if its derived series has bounded length and has smooth abelian factor groups.
\end{defi}

In particular, in \cite{FIMSS14} Friedl et al. introduced the \textit{orbit coset problem} which is a quantum generalization through the orbit superposition of hidden shift problem and HSP.
By finding an efficient algorithm for the orbit coset problem over solvable black-box groups with unique encoding that have smoothly solvable commutator subgroups, they also obtained an efficient solution of the HSP in that case.

Black-box approaches were also useful when dealing with semidirect products, hence we conclude this section with an historical excursus of these developments.

In their 2004 preprint, then published in 2007 \cite{ILG07}, Inui and Le Gall studied the HSP over $\mathbb{Z}_{N} \rtimes_\varphi \mathbb{Z}_q$ with $q$ prime.
They reduced the study to the cases $\mathbb{Z}_{p^r} \rtimes_\varphi \mathbb{Z}_q$ for $p$ prime and obtained a complete classification of these groups into five classes.
They proposed an efficient quantum algorithm for the class $\mathbb{Z}_{p^r} \rtimes_\varphi \mathbb{Z}_p$ for $p, r \neq 2$ and $\varphi(a_2)(a_1) = (a_2 p^{r - 1} + 1) a_1$, even if the black-box group is given without unique encoding.
They also generalized this algorithm for the case $\mathbb{Z}_{p^r}^k \rtimes_\varphi \mathbb{Z}_p$, where $\mathbb{Z}_p$ acts separately on each coordinate of $\mathbb{Z}_p^k$, if it is given as a black-box group with unique encoding.

Later, in 2006, Chi, Kim, and Lee \cite{CKL06} noted that the previous algorithm also allows to solve the HSP over $\mathbb{Z}_{2 p^r} \rtimes_\varphi \mathbb{Z}_p$ with $p$ an odd prime and found an extended version for $\mathbb{Z}_{N} \rtimes_\varphi \mathbb{Z}_p$ where $N = p_1^{r_1} \cdots p_n^{r_n}$ and $p$ does not divide any $p_i - 1$.
Their idea was to reduce the group to $\mathbb{Z}_{N/p_i^{r_i}} \times (\mathbb{Z}_{p_i^{r_i}} \rtimes_\varphi \mathbb{Z}_p)$ and then apply the algorithm of the 2004 preprint of \cite{ILG07}.

Using a similar approach, in 2007, Cosme and Portugal \cite{CP07} addressed the HSP over $\mathbb{Z}_{p^r} \rtimes_\varphi \mathbb{Z}_{p^2}$ with $p$ an odd prime and $r \ge 5$, and over the group $\mathbb{Z}_{N} \rtimes_\varphi \mathbb{Z}_{p^2}$ where $N = p_1^{r_1} \cdots p_n^{r_n}$ and $p$ does not divide any $p_i - 1$.

In 2017, Gonçalves, Fernandes, and Cosme \cite{GFC17} obtained another generalization for the HSP over $\mathbb{Z}_{N} \rtimes_\varphi \mathbb{Z}_{p^r}$ for $N = p_1^{r_1} \cdots p_n^{r_n}$ and when there is one $1 \le i \le n$ such that the odd prime $p$ divides $p_i - 1$, but does not divide $p_j - 1$ for each $j \ne i$.

Another recent result used black-box groups to solve the HSP in polynomial time over nilpotent groups with constant nilpotency class and whose order only have prime factors also bounded by a constant \cite{II24}.

It is noteworthy how the results obtained from these black-box methods allow to reduce the groups into abelian pieces through a normal series.
However, these methods are not suitable for groups which contain a non-cyclic simple group in their composition series such as $A_n$ (alternating) or $S_n$ (symmetric).

\section{Other directions}
\label{sec:concl}

Since we are interested in cryptographical applications, finite groups are the most relevant cases for this survey, but the scientific community has also worked on other interesting developments.
For instance, infinite groups have been addressed both in the abelian \cite{Tra23} and general case \cite{Kup25}, and also in relation to group-based cryptography \cite{HK18}. But there are also generalizations to Heisenberg groups \cite{RRS05,Bac08}, universal algebras \cite{MW20}, continuous vector spaces \cite{BDF20} or symmetries of a partition \cite{DISW13}.

Moreover, although computational and complexity aspects are outside our focus, the literature is full of works that obtain optimal implementations or reduce the required quantum queries of the algorithms for solving the HSP \cite{HKK08,FZ08,Sha14,SSV19,YL22a,YL22b,Nay22}.

\section*{Acknowledgment}
The work is partially funded by the Italian Ministry of University and Research in the framework of the Call for Proposals for scrolling of final rankings of the PRIN 2022 call - Protocol no.~2022RFAZCJ.
The second, third, and fourth authors are supported by the ``National Group for Algebraic and Geometric Structures, and their Applications" (GNSAGA - INdAM).

\printbibliography

\newpage

\section*{Appendix} \label{sec:app}

\paragraph{HSP on specific groups.}
For convenience of the readers, we collect in \Cref{tab:hsp} below the groups for which HSP has been studied, listing the methods used and the reference.

\renewcommand{\arraystretch}{1.5}
\begin{longtable}{p{3.0cm}|p{5.5cm}|p{2.5cm}|p{2.3cm}}
\caption{Collection of the main solving methods for the HSP by different group types in order of publication.
If not specified, the algorithm runs in polynomial time.}
\label{tab:hsp}\\

Group type & Notes & Method & Reference \\
\hline
\hline

Finite abelian
& $\mathbb{Z}_N$ 
& Fourier & \cite{Sho94} \\

& $\mathbb{Z}_N \times \mathbb{Z}_N$ 
& Fourier & \cite{Sho94} \\

& All 
& Fourier & \cite{Kit95} \\

& $\mathbb Z_{m^k}^n$  
& Fourier & \cite{II22} \\

& Solvable groups 
& Fourier & \cite{II22} \\

\hline

Dihedral $D_N$
\newline $(\mathbb{Z}_N \rtimes_\varphi \mathbb{Z}_2)$ 
& All 
& Fourier \newline (\textbf{exp}~classic) & \cite{EH98} \newline (arXiv 1998) \\

& All, SVP 
& Fourier \newline (\textbf{subexp}) & \cite{Reg04} \\

& All 
& Fourier \newline (\textbf{subexp}) & \cite{Kup05} \\

& All 
& Fourier \newline (\textbf{subexp}) & \cite{Kup13} \\

& Small cases & SV~algorithm & \cite{LBF14} \\

& $N = (q^{d + 1} - 1) / (q - 1)$ 
\newline for specific $q, d$ 
& Hidden shift & \cite{Roe16} \\

& $N=2^n$ 
& Heuristic & \cite{Fuj20} \\

\hline

Other semidirect\newline products 
& $(\mathbb{Z}_2^k \times \mathbb{Z}_2^k) \rtimes_\varphi \mathbb{Z}_2$ 
\newline (i.e., wreath product) 
& Fourier & \cite{RB98} \\

& $\mathbb{Z}_{p}^k \rtimes_\varphi \mathbb{Z}_2$ 
\newline with $p$ prime 
& Fourier & \cite{FIMSS14} \newline (arXiv 2002) \\

& $\mathbb{Z}_{p^r} \rtimes_\varphi \mathbb{Z}_p$
\newline with $p$ odd prime and $r \neq 2$ 
& Black-box & \cite{ILG07} \newline (arXiv 2004) \\

& $\mathbb{Z}_{p^r}^k \rtimes_\varphi \mathbb{Z}_p$
\newline with $p$ prime 
\newline and unique encoding 
& Black-box & \cite{ILG07} \newline (arXiv 2004) \\

& $\mathbb{Z}_N \rtimes_\varphi \mathbb{Z}_p$
\newline with $p$ prime 
\newline and $N / p = \text{poly}(\log N)$ ,
& Fourier & \cite{BCvD05} \\

& $\mathbb{Z}_p^k \rtimes_\varphi \mathbb{Z}_p$
\newline with $p$ prime 
\newline (e.g., Heisenberg groups)
& Fourier & \cite{BCvD05} \\

& $\mathbb{Z}_p \rtimes \mathbb{Z}_q$
\newline with $p, q$ primes,
\newline $q \,|\, (p - 1)$, $p / q = \text{poly}(\log p)$ 
\newline (e.g., $\text{Aff}(1, p)\cong \mathbb{Z}_p \rtimes \mathbb{Z}_{p}^\times$) 
& Fourier & \cite{MRRS07} \newline (arXiv 2005)\\

& $\mathbb{Z}_{2 p^r} \rtimes_\varphi \mathbb{Z}_p$ 
\newline with $p$ odd prime 
& Black-box & \cite{CKL06} \\

& $\mathbb{Z}_{\Pi p_i^{r_i}} \rtimes_\varphi \mathbb{Z}_q$ 
\newline with $p_i, q$ primes 
\newline and $q \,\nmid\, p_i - 1$, $\forall i$
& Black-box & \cite{CKL06} \\

& $\mathbb{Z}_{p^r} \rtimes_\varphi \mathbb{Z}_{p^2}$ 
\newline with $p$ odd prime and $r > 4$ 
& Black-box & \cite{CP07} \\

& $\mathbb{Z}_{\Pi p_i^{r_i}} \rtimes_\varphi \mathbb{Z}_{q^2}$ 
\newline with $p_i, q$ primes 
\newline and $q \,\nmid\, p_i - 1$, $\forall i$
& Black-box & \cite{CP07} \\

& $\mathbb{Z}_p^{k + 1} \rtimes_\varphi \mathbb{Z}_p^k$
\newline with $p$ prime 
\newline (i.e., Weyl-Heisenberg groups) 
& Fourier & \cite{KR08} \\

& $\mathbb{Z}_{p} \rtimes_\varphi \mathbb{Z}_{q^s}$ 
\newline with $p, q$ odd primes, 
\newline $q \,|\, p - 1$, $p / q = \text{poly}(\log p)$
& Fourier & \cite{GPC09} \\

& $\mathbb{Z}_{p^r}\!\rtimes_\varphi \mathbb{Z}_{q^s}$ 
\newline with $p, q$ odd primes,
\newline $p^r / q = \text{poly}(\log p^r)$
& Fourier & \cite{GP11} \\

& $\mathbb{Z}_{p^r} \rtimes_\varphi \mathbb{Z}_{q^s}$ 
\newline with $p \neq q$ primes,
\newline $p^r / q^t = \text{poly}(\log p^r)$,
\newline for specific $0 \leq t \leq s$
& Fourier & \cite{vDD14} \\

& $\mathbb{Z}_{\Pi p_i^{r_i}} \rtimes_\varphi \mathbb{Z}_{q^s}$
\newline with $p_i, q$ primes, $q$ odd,
\newline $q^t \mid p_i - 1$ for one $i$
\newline and specific $0 \leq t \leq s$,
\newline $q \,\nmid\, p_j - 1$ if $j \neq i$
& Black-box & \cite{GFC17} \\

\hline

Other groups 
& $H$ normal in $G$
\newline (e.g., $G$ Dedekind) 
& Fourier & \cite{HRT00} \\

& $\frac{|G|}{|M_G|}\!=\!\exp\big(O\big((\log \log |G|)^{\frac{1}{2}}\big)\big)$
\newline (e.g., $\langle 2^n \rangle \rtimes \langle 3 \rangle$ in $\mathbb Z_{3\cdot 2^n}$)
\newline with $M_G$ intersection of normalizers of $H$ in $G$ 
& Fourier & \cite{GSV01} \newline\\

& Solvable groups with $H$ normal
& Black-box & \cite{IMS01} \\

& Permutation groups with $H$ normal
& Black-box & \cite{IMS01} \\

& Groups with unique encoding
\newline and small commutator
& Black-box & \cite{IMS01} \\

& Groups with unique encoding
\newline and normal elementary abelian 2-subgroup of small index
& Black-box & \cite{IMS01} \\

& Solvable groups with unique encoding and a smoothly solvable commutator subgroup 
& Black-box & \cite{FIMSS14} \newline (arXiv 2002) \\

& $\mathbb{Z}_2^k \times A \times Q_8$
\newline with $A$ abelian,
\newline $Q_8$ quaternion
\newline (i.e., hamiltonian groups)
& Fourier & \cite{HRT03} \\

& Poly-near-hamiltonian groups
& Fourier & \cite{Gav04} \\

& Group extensions by groups with efficient HSP algorithm
& Fourier & \cite{MRRS07} \newline (arXiv 2005)\\

& Extra-special $p$-groups 
& Fourier & \cite{ISS07} \\

& Nil-2 groups
& Fourier & \cite{ISS10} \newline (arXiv 2007) \\

& $G = \text{SL}(2, p^r)$
\newline when $H = G_s:=\{g:gs=s\}$
& Fourier & \cite{DMR10} \\

& $G = \text{PSL}(2, p^r)$
\newline when $H = G_s$
& Fourier & \cite{DMR10} \\

& $G = \text{PGL}(2, p^r)$
\newline when $H = G_s$
& Fourier & \cite{DMR10} \\

& $\text{GL}(n, q)$
\newline with not too small $q < n$
\newline and $H$ Borel
& Fourier & \cite{Iva12} \\

& $\text{GL}(n, q)$
\newline with $H$ parabolic
& Fourier & \cite{DIK14} \\
\hline 
\end{longtable}

\paragraph{Reductions to HSP.}
Here we recall the main problems that can be reduced to an instance of the HSP, highlighting the group $G$ and the function $f$ that hides the subgroup $H$ for each case.

\begin{itemize}

    \item \textbf{Simon's Problem}
    \begin{itemize}
        \item Group: $(\mathbb{Z}_2^n, \oplus)$.
        \item Hiding function: 
        $f\colon \mathbb{Z}_2^n \to \mathbb{Z}_2^m$ such that there is
        $x \in \mathbb{Z}_2^n$ that for each $g \neq g'$ we have $\quad f(g) = f(g')\quad$ if and only if $\quad g' = g \oplus x$.
        \item Hidden subgroup: $\{0, x\}$.
    \end{itemize}

    \item \textbf{Discrete Logarithm Problem (DLP)}
    \begin{itemize}
        \item Group: $(\mathbb{Z}_N \times \mathbb{Z}_N, +)$.
        \item Hiding function: 
        $(g_1,g_2) \mapsto a^{g_1} b^{-g_2}$.
        \item Hidden subgroup: $\langle (x,1) \rangle$.
    \end{itemize}

    \item \textbf{Order Finding Problem (OFP)}
    \begin{itemize}
        \item Group: $(\mathbb{Z}_{\lvert A \rvert}, +)$.
        \item Hiding function: $g \mapsto a^g$.
        \item Hidden subgroup: $\langle x \rangle$.
    \end{itemize}

    \item \textbf{Shortest Vector Problem (SVP) via Dihedral HSP}
    \begin{itemize}
        \item Group: $D_N \cong \mathbb{Z}_N \rtimes_\varphi \mathbb{Z}_2$,
        where $\varphi(x)(n) = (-1)^x n$.
        \item Hiding function: black box.
        \item Hidden subgroup: $H_{N,r} = \langle (r, 1) \rangle$.
    \end{itemize}

    \item \textbf{Permutation code Equivalence Problem (PEP)}
    \begin{itemize}
        \item Group: 
        $\left(\mathrm{GL}_k(\mathbb{F}_q) \times \mathcal{S}_n\right)^2 \rtimes \mathbb{Z}_2$.
        \item Hiding function:
        \[
        ((S_1,P_1),(S_2,P_2), b) \mapsto 
        \begin{cases}
        (S_1^{-1}MP_1,\, S_1^{-1}M'P_2), & b = 0, \\
        (S_1^{-1}M'P_2,\, S_1^{-1}MP_1), & b = 1.
        \end{cases}
        \]
        \item Hidden subgroup:
        \[
        \bigl((H_0, s^{-1}H_0s),0\bigr)
        \cup
        \bigl((H_0s, s^{-1}H_0),1\bigr),
        \]
        where $H_0 = \{(S,P)\mid S^{-1}MP = M\}$.
    \end{itemize}

\end{itemize}

\end{document}